\theoremstyle{definition}
\newtheorem{definition}{Definition}[section]
\theoremstyle{remark}
\newtheorem{remark}{Remark}
\newtheorem{example}{Example}[section]
\DeclareMathOperator*{\argmin}{arg\,min}
\definecolor{mycolor}{rgb}{0.122, 0.435, 0.698}
\newmdenv[innerlinewidth=0.5pt, roundcorner=4pt,linecolor=mycolor,innerleftmargin=6pt,
innerrightmargin=6pt,innertopmargin=6pt,innerbottommargin=6pt]{mybox}
\newcommand\revision[1]{\textcolor{black}{#1}}
\def\BibTeX{{\rm B\kern-.05em{\sc i\kern-.025em b}\kern-.08em
    T\kern-.1667em\lower.7ex\hbox{E}\kern-.125emX}}
\begin{document}
\title{PRIVIC: A privacy-preserving method for incremental collection of location data
\thanks{Supported by the European Research Council (ERC) project HYPATIA under the European Union’s Horizon 2020 research and innovation programme (grant agreement no. 835294).}
}
\author{\IEEEauthorblockN{1\textsuperscript{st} Sayan Biswas}
\IEEEauthorblockA{\textit{INRIA, LIX, \'Ecole Polytechnique}\\
Palaiseau, France \\
sayan.biswas@inria.fr}
\and
\IEEEauthorblockN{2\textsuperscript{nd} Catuscia Palamidessi}
\IEEEauthorblockA{\textit{INRIA, LIX, \'Ecole Polytechnique}\\
Palaiseau, France \\
catuscia@lix.polytechnique.fr}
}
\maketitle

\begin{abstract}
 With recent advancements in technology, the threats of privacy violations of individuals' sensitive data are surging. Location data, in particular, have been shown to carry a substantial amount of sensitive information. A standard method to mitigate the privacy risks for location data consists in adding noise to the true values to achieve geo-indistinguishability (geo-ind). However, geo-ind alone is not sufficient to cover all privacy concerns. In particular, isolated locations are not sufficiently protected by the state-of-the-art Laplace mechanism (LAP) for geo-ind. In this paper, we focus on a mechanism based on the Blahut-Arimoto algorithm (BA) from the rate-distortion theory. We show that BA, in addition to providing geo-ind, enforces an elastic metric that mitigates the problem of isolation. Furthermore, BA provides an optimal trade-off between information leakage and quality of service. We then proceed to study the utility of BA in terms of the statistics that can be derived from the reported data, focusing on the inference of the original distribution. To this purpose, we de-noise the reported data by applying the iterative Bayesian update (IBU), an instance of the expectation-maximization method. It turns out that BA and IBU are dual to each other, and as a result, they work well together, in the sense that the statistical utility of BA is quite good and better than LAP for high privacy levels. Exploiting these properties of BA and IBU, we propose an iterative method, PRIVIC, for a privacy-friendly incremental collection of location data from users by service providers. We illustrate the soundness and functionality of our method both analytically and with experiments. 
\end{abstract}

\begin{IEEEkeywords}
location privacy, geo-indistinguishability, rate-distortion theory, privacy-utility trade-off
\end{IEEEkeywords}

\section{Introduction}

As the need and development of various kinds of research and analysis using personal data are becoming more and more significant, the risk of privacy violations of sensitive information of the data owners is also increasing manifold. One of the most successful proposals to address the issue of privacy protection is \emph{differential privacy (DP)}~\cite{DworkDP1,DworkDP2}, a mathematical property that makes it difficult for an attacker to detect the presence of a record in a dataset. This is typically achieved by answering queries performed on the dataset in a (controlled) noisy fashion. Lately, the \emph{local variant of differential privacy (LDP)}~\cite{DuchiLDP} has gained popularity due to the fact that the noise is applied at the data owner's end without needing a trusted curator. LDP is particularly suitable for situations where a data owner is a user who communicates her personal data in exchange for some service. One such scenario is the use of location-based services (LBS), where a user typically sends her location in order to obtain  information like the shortest path to a destination, nearby points of interest, traffic information, etc. The security and the convenience of  implementing the local model directly on a user's device (tablets, smartphones, etc.) make LDP very appealing. 

Typically, in exchange for their service,  providers incrementally collect  their users' data and then make them available to other parties which process them to provide useful statistics to companies and institutions. Obviously, the statistical precision of the collected data is essential for the quality of the analytics performed (\emph{statistical utility}). However, injecting noise locally into the data to protect the privacy of the users usually has a negative effect on the statistical utility. Additionally, the noise degrades the \emph{quality of service} (QoS) as well, since, obviously, the service results from the elaboration of the information received. 

Substantial research has been done to address the privacy-utility trade-off in the context of DP. In LDP, the primary focus has been to optimize the utility from the data collector's perspective, i.e., devising mechanisms and post-processing methods that would allow deriving the most accurate statistics from the collection of the noisy data~\cite{DuchiLDP,GoogleRappor}. In contrast, in domains such as location privacy, the focus usually has been on optimizing the QoS, i.e., the utility from the point of view of the users. In particular, this is the case for the framework proposed by Shokri et al.~\cite{ShokriLocationPrivacy, ShokriPrivacyGames}. 

We argue that it is important to meet the interest of all parties involved, and hence to consider both kinds of utility at the same time. Hence, the first goal of this paper is to develop a \emph{location-privacy preserving mechanism (LPPM)} that, in addition to providing formal location-privacy guarantees,  preserves as much as possible \emph{both}  the statistical utility and  the QoS.

Now, one may think that statistical utility and QoS are aligned since they both benefit from preserving as much original information as possible under the privacy constraint. However, this is not true in general:  the optimization of statistical utility does not necessarily imply a significant improvement in the QoS, nor vice-versa. A counterexample is provided by Example~\ref{examp:BadLPPM} in Section~\ref{sec:PRIVIC}. Hence, 
the preservation of both statistical utility and  QoS is trickier than it may appear at first sight.

One of the approaches which have been proposed to protect location privacy is  \emph{geo-indistinguishability} (geo-ind)~\cite{AndresKostasCatuscia_GeoInd},  which essentially obfuscates locations based on the distance between them. This idea works particularly well for protecting the precision of the location as it ensures that an attacker would not be able to differentiate between points that are close on the map by observing the reported noisy location. At the same time, it does not inject an enormous amount of noise that would be necessary to make far-away locations indistinguishable. Moreover, geo-ind has been shown to formally satisfy the basic sequential compositionality theorem~\cite{galli2022group}, just like DP and its local variant. Although this approach of distance-based obfuscation seems enticing at a first glance, one of the issues it poses is that it may leave the geo-spatially isolated locations vulnerable, i.e., identifiable despite being formally geo-indistinguishable~\cite{Elastic_Kostas}. To improve the situation,  \cite{Elastic_Kostas} introduced the notion of \emph{elastic distinguishability metrics}, which essentially leads to injecting more noise  when the location to protect is isolated.

The \emph{Blahut-Arimoto algorithm} (BA)~\cite{Blahut72computationof, Arimoto1972AnAF} from rate-distortion theory (a branch of information theory) 
Pareto-optimizes the trade-off between \emph{mutual information} (MI) and average distortion. This property is appealing in the context of privacy because MI is often considered a measure of information leakage and average distortion is
a commonly used metric for quantifying QoS. Moreover, BA was proven to satisfy geo-ind in \cite{Oya:17:CCS} opening a door to study it as a potential LPPM. In this paper, we start off by exploring the privacy-preserving properties of BA and comparing them with those of the \emph{Laplace mechanism} (LAP)~\cite{AndresKostasCatuscia_GeoInd} which is considered as the state-of-the-art mechanism for geo-ind. We show that, besides geo-ind, BA provides an elastic distinguishability metric and, hence, protects even the most isolated points in the map, unlike LAP. We then examine the statistical utility, focusing on the estimation of the most general statistical information, namely the distribution of the original location data (true distribution).  The ``best'' estimation is known in statistics as the \emph{maximum likelihood estimation} (MLE), and can be 
 computed  using the \emph{iterative Bayesian update} (IBU)~\cite{AgarwalIBU}, an instance of the \emph{expectation maximization} (EM) method.
 We discover a duality between BA and IBU, which in our opinion is quite intriguing, because BA and IBU were developed in different contexts, using different concepts and metrics, and for completely different purposes. 
 We prove experimentally that the statistical utility of BA is very good, i.e., the MLE is very close to the true distribution. We conjecture that this is probably due to the duality between the mechanism that injects the noise (BA) and the one that de-noises the noisy data (IBU). In any case, the experiments show that the statistical utility of BA
 outperforms that of LAP for high levels of privacy, eventually becoming comparable as the level of privacy decreases. 

One important point to note is that  BA requires the knowledge of the original distribution to provide the optimal mechanism. When it is fed with only an approximation of the distribution, it only provides an approximated result. 
We acknowledge that the distribution of the original data is usually off-limits and, even when available, it typically gets outdated over time. In any case, we can soundly assume that it is not available because it is essentially the reason for collecting the data. Hence we have a vicious circle: we want to collect data in a privacy-friendly fashion to estimate the original distribution while wanting to use a privacy mechanism that requires knowing a good approximation of the original distribution. Motivated by this dilemma, we propose PRIVIC, an incremental data collection method providing extensive privacy protection for the users of LBS's, while retaining a high utility for both them and the service providers, and ensuring that both parties, acting in their best interest, would benefit from the end mechanism.

\revision{Finally, we prove formally the convergence of PRIVIC to the true distribution, and illustrate empirically the privacy-utility trade-off of our method.} The experiments also demonstrate the efficacy of combining BA and IBU, in that the estimation of the original distribution is very accurate, especially when measured using a notion of distance between distributions compatible with the ground distance used to measure the QoS (e.g., the Earth Mover's distance). All the experiments were performed using real location data from the Gowalla dataset for Paris and San Francisco. 

\vspace{-1.6mm}
\subsubsection*{Contributions}
The key contributions of this paper are:

\begin{enumerate}
\vspace{-0.9mm}
\item We show, analytically and with experiments on real datasets, that the BA mechanism, in addition to geo-ind, provides an elastic distinguishability metric. As such, it protects the privacy of isolated locations, which the standard LAP for geo-ind fails at. 

\item  We prove that BA produces an invertible mechanism, which means that the MLE is unique. This is crucial to prove that the IBU always converges to the true distribution and that, therefore, we can get a good statistical utility.

\item We establish a duality between BA and IBU, thus demonstrating a connection between rate-distortion theory and the expectation-maximization method from statistics.

\item We show experimentally that BA provides a better statistical utility than LAP for high levels of privacy, eventually becoming comparable as the level of privacy decreases.

\item Since the construction of the optimal BA requires precise knowledge of the true distribution, we propose an iterative method (PRIVIC) that  alternates between BA and IBU, thus  getting a better and better estimation of the true distribution  as more (noisy) data get collected. We show, both formally and with experiments on real location datasets, that PRIVIC converges to the true distribution. In summary, PRIVIC produces a geo-indistinguishable LPPM with an elastic distinguishability metric, which optimizes the trade-off with the QoS and provides high statistical utility.

\item \revision{We investigate the effect on the privacy guarantees of our method by considering adversarial users who report their locations falsely to compromise the privacy of the isolated locations in the map.}
 
\end{enumerate}
 \vspace{0.9mm}

\subsubsection*{Related Work}~\label{sec:related}
The trade-off between privacy and utility has been widely studied in the literature~\cite{BrickellPvcyUtility:2008, TraftTradeoff:2014}. Optimization techniques for DP and utility for statistical databases have been analyzed by the community from various perspectives~\cite{GhoshOptimalPrivacyUtility:2012, GupteOptimalPrivacyforMinMaxAgents, LiOptimalLinearQueries:2010}. There have been works focusing on devising privacy mechanisms that are optimal to limit the privacy risk against Bayesian inference attacks while maximizing the utility~\cite{ShokriLocationPrivacy, ShokriPrivacyGames}. In \cite{Oya:17:CCS}, Oya et al. examine an optimal LPPM w.r.t. various privacy and utility metrics for the user.

In \cite{Oya:19:EuroSnP}, Oya et al. consider the optimal LPPM proposed by Shokri et al. in \cite{ShokriLocationPrivacy} which maximizes a notion of privacy (the \emph{adversarial error}) under some bound on the QoS. The construction of the optimal LPPM requires the knowledge of the original distribution, and \cite{Oya:19:EuroSnP} uses the EM method to estimate it and design \emph{blank-slate models} empirically shown to outperform the traditional hardwired models. However, a  problem with their approach is that there may exist LPPMs that are optimal in the sense of~\cite{ShokriLocationPrivacy}, but with no statistical utility, see Example \ref{examp:BadLPPM} in Section~\ref{sec:PRIVIC}. Furthermore, for the mechanisms considered in \cite{Oya:19:EuroSnP} the EM method may fail to converge to the true distribution.  Indeed, \cite{EhabConvergenceIBU} points out various mistakes in the results of \cite{AgarwalIBU}, on which \cite{Oya:19:EuroSnP} intrinsically relies to prove the convergence of their method.

\cite{Romanelli:20:CSF} proposed a method for generating privacy mechanisms that tend to minimize mutual information using an ML-based approach. However, this work assumes the knowledge of the exact prior from the beginning, unlike ours. Moreover, \cite{Romanelli:20:CSF} does not provide formal guarantees for location privacy (e.g., geo-ind) which is one of the main aspects captured by our work. In \cite{Zhang_trace}, Zhang et al. consider the Blahut-Arimoto algorithm in the context of location privacy. However, their proposed method also requires the knowledge of the prior distribution to construct the LPPM. Additionally, \cite{Zhang_trace} focuses on measuring privacy for the trace of a single user. On the contrary, our notion of privacy assumes the collection of single check-ins (or check-ins separated in time) by a set of users.

The Laplace mechanism has been rigorously studied in the literature in various scenarios as the cutting-edge standard to achieve geo-ind~\cite{AndresKostasCatuscia_GeoInd, galli2022group, atmaca2022privacy} and has been proven to be optimal for one-dimensional data w.r.t. Bayesian utility~\cite{Fernandes:21:LICS}. Despite its wide popularity, it has been recently criticized due to its limitation to protect geo-spatially isolated points from being identified by adversaries~\cite{Elastic_Kostas}. The authors of \cite{Elastic_Kostas} addressed this concern by proposing the idea of \emph{elastic distinguishability metrics}.

Our paper also considers mutual information (MI) as an additional privacy guarantee. MI and its closely related variants (e.g. conditional entropy) have been shown to nurture a compatible relationship with DP~\cite{Cuff:16:CCS}. MI measures the correlation between observations and secrets, and its use as a privacy metric is widespread in the literature. Some key examples are: gauging anonymity~\cite{Zhu:05:ICDCS, Chatzikokolakis:08:IC}, estimating privacy in training ML models with a typical cross-entropy loss function~\cite{Abadi:16:CoRR, Tripathy:19:ACCC, Romanelli:20:CSF, Huang:17:Entropy}, and assessing location-privacy~\cite{Oya:17:CCS}.

A popular choice of utility metric for the users is the \emph{average distortion}, which quantifies the expected quality loss of the service due to the noise induced by the  mechanism. Such a metric has gained the spotlight in the community~\cite{AndresKostasCatuscia_GeoInd, NicolasKostasCatusciaOptimalGeoInd, ChatzikokolakisPalamidessiStronati2015, ChatzikokolakisElSalamounyPalamidessi_PracticalLocation2017, ShokriLocationPrivacy} due to its intuitive and simple nature. On the other hand, a standard notion  of statistical utility for the data consumer is the precision of the estimation of  the distribution on the original data from that of the noisy data. Iterative Bayesian update~\cite{AgarwalIBU,agrawal2005privacy} provides one of the most flexible and powerful estimation techniques  and has recently become in the focus of the community ~\cite{EhabConvergenceIBU, EhabGIBU}.

Incremental and privacy-friendly data collection has been explored both in the context of $k$-anonymity~\cite{Byun_incremental_data_dissemination, Byun_Secure_Anon_incremental,anjum2017tau} and DP~\cite{wang2016using, Gursoy_utilityawareDC}. However, to the best of our knowledge, the problem of providing a rather robust privacy guarantee while preserving utility for both data owners and data consumers has not been addressed by the community so far. 

\subsubsection*{Plan of the paper}
Section~\ref{sec:prelims} introduces preliminary ideas from the literature relevant to this work. Section~\ref{sec:LPPM_with_BA} highlights BA as an LPPM because of its extensive privacy-preserving properties. Section~\ref{sec:duality} establishes the duality between BA and IBU. Section~\ref{sec:PRIVIC} explains our proposed method (PRIVIC). Section~\ref{sec:experiments} exhibits the working of PRIVIC with experiments using real locations from the Gowalla dataset illustrating the convergence of our method. \revision{Section~\ref{sec:vul_PRIVIC} discusses and illustrates with experiments the vulnerability of PRIVIC under adversarial data submission} and Section~\ref{sec:conclusion} concludes. Appendices \ref{app:proofs} and \ref{app:tables} contain the proofs of the theorems derived in the paper and the relevant tables supporting the experimental analysis of PRIVIC, respectively.


\section{Preliminaries}~\label{sec:prelims}
\vspace{-0.5cm}
\subsection{Standards of privacy}
\begin{definition}[$d$-\emph{privacy}, a.k.a. \emph{ metric privacy}~\cite{chatzikokolakis_dprivacy}]
For any space $\mathcal{X}$ equipped with a metric $d:\mathcal{X}^2\mapsto \mathbb{R}_{\geq 0}$ and an output space $\mathcal{Y}$, a mechanism $\mathcal{R}:\mathcal{X}\mapsto \mathcal{Y}$ is $\epsilon$-\emph{$d$-private} if $\mathbb{P}[\mathcal{R}(x)=y] \leq e^{\epsilon d(x,x')}\,\mathbb{P}[\mathcal{R}(x')= y]$ for every $x,x'\in\,\mathcal{X}$ and $y\in\mathcal{Y}$.
\end{definition}
Note  that: 
\begin{itemize}
 \item 
Setting $d$ as the \emph{discrete metric} on any $\mathcal{X}$, we obtain the definition of \emph{local differential privacy (LDP)}~\cite{DuchiLDP}.
\item 
Setting $\mathcal{X}=\mathcal{Y}=\mathbb{R}^2$ and $d$ as the \emph{Euclidean metric}, we get the definition of \emph{geo-ind}~\cite{AndresKostasCatuscia_GeoInd}.
\end{itemize}

\begin{definition}[Mutual information\cite{ShannonInfoTheory}]
\label{def:MI}
Let $(X,Y)$ be a pair of random variables defined over the discrete space $\mathcal{X}\times\mathcal{Y}$ such that $\mu$ is the joint \emph{probability mass function} (PMF) of $X$ and $Y$, and $p_{X}$ and $p_{Y}$ are the marginal PMFs of $X$ and $Y$, respectively, and $p_{X|Y}$ is the conditional probability of $X$ given $Y$. Then the (Shannon) \emph{entropy} of $X$, $H(X)$, is defined as $H(X)= -\sum\limits_{x\in\mathcal{X}}p_X(x) \log p_X(x)$. The \emph{residual  entropy} of $X$ given $Y$ is defined as $H(X|Y)= \sum\limits_{y\in\mathcal{Y}}p_Y(y) H(X|Y=y) =  -\sum\limits_{y\in\mathcal{Y}}p_Y(y)\sum\limits_{x\in\mathcal{X}}p_{X|Y}p(x|y) \log p_{X|Y}(x|y)$, and, finally, the 
\emph{mutual information (MI)} is given by: $$I(X|Y)=H(X) - H(X|Y) =\sum\limits_{x\in\mathcal{X}}\sum\limits_{y\in\mathcal{Y}}\mu(x,y)\log\frac{\mu(x,y)}{p_{X}(x)p_{Y}(y)}$$

\end{definition}

\begin{remark}
    MI has often been used as a notion of privacy (and security) in the literature. In particular, 
    \cite{Kopf:07:CCS} has provided an operational interpretation of MI in terms of an attacker model. On the other hand, other researchers have strongly criticized the use of Shannon entropy and MI as  measures of privacy, see for example \cite{Syverson:13:SecProto}. 

    We do not take sides in this controversy: for us, MI is only a means to construct a mechanism that provides geo-ind under an elastic metric, which is our reference  privacy notion.  
\end{remark}

\subsection{Notions of utility}

\begin{definition}[Quality of service]~\label{def:AvgDist}
For discrete spaces $\mathcal{X}$ and $\mathcal{Y}$, let $d\colon\mathcal{X}\times\mathcal{Y}\rightarrow \mathbb{R}_{\geq 0}$ be any distortion metric (a generalization of the notion of distance). Let $X$ be a random variable on $\mathcal{X}$ with PMF $p_{\mathcal{X}}$ and $\mathcal{C}$ be any randomizing mechanism where $\mathcal{C}_{xy}$ is the probability of $x$ being mapped by $\mathcal{C}$ into $y$. We define the \emph{quality of service (QoS)} of $X$ for $\mathcal{C}$ as the \emph{average distortion w.r.t. $d$}, given as:
\begin{equation*}
    AvgD(X,\mathcal{C},d)=\sum\limits_{x\in\mathcal{X}}\sum\limits_{y\in\mathcal{Y}}p_{\mathcal{X}}(x)\mathcal{C}_{xy}d(x,y)
\end{equation*}
\end{definition}

\begin{definition}[Full-support probability distribution]\label{def:fullsupportPMF} Let $\theta$ be a probability distribution defined on the space $\mathcal{X}$. $\theta$ is a \emph{full-support} distribution on $\mathcal{X}$ if $\theta(x)>0$ for every $x\in\mathcal{X}$.

\end{definition}

\begin{definition}[Iterative Bayesian update~\cite{AgarwalIBU}]\label{def:IBU}
Let $\mathcal{C}$ be a privacy mechanism that locally obfuscates points from the discrete space $\mathcal{X}$ to $\mathcal{Y}$ such that $\mathcal{C}_{xy}=\mathbb{P}[y|x]$ for all $x,y\in \mathcal{X},\mathcal{Y}$. Let $X_1,\ldots,X_n$ be i.i.d. random variables on $\mathcal{X}$ following some PMF $\pi_{\mathcal{X}}$. Let $Y_i$ denote the random variable of the output when $X_i$ is obfuscated with $\mathcal{C}$. 

Let $\overline{y}=\{y_1,\ldots,y_n\}$ be a realisation of $\{Y_1,\ldots, Y_n\}$ and $\vb*{q}$ be the empirical distribution obtained by counting the frequencies of each $y$ in $\overline{y}$. The \emph{iterative Bayesian update (IBU)} 
estimates $\pi_{\mathcal{X}}$ by converging to its maximum likelihood estimate (MLE) with the knowledge of $\vb*{q}$ and  $\mathcal{C}$. IBU works as follows:
\begin{enumerate}
    \item Start with any full-support PMF $\theta_0$ on $\mathcal{X}$.
    \item Iterate $ \theta_{r+1}(x)=\sum\limits_{y\in\mathcal{Y}}\vb*{q}(y)\frac{\theta_r(x)\mathcal{C}_{xy}}{\sum\limits_{z\in\mathcal{X}}\theta_r(z)\mathcal{C}_{zy}}$ for all $x\in\mathcal{X}$.
\end{enumerate}
\end{definition}

The convergence of IBU has been studied in \cite{AgarwalIBU,EhabConvergenceIBU}. 
For a given set of observed locations, the limiting estimate  $\hat{\pi}_{\mathcal{X}}=\lim\limits_{r\to\infty}\theta_{r}$  is well-defined by the privacy mechanism in use, $\mathcal{C}$, and the empirical distribution of the noisy locations, $\vb*{q}$. We will functionally denote $\hat{\pi}_{\mathcal{X}}$ as $\texttt{IBU}(\vb*{q},\mathcal{C})$. 

\color{black}

Next, we recall a generalization of IBU from the literature that we use in this work. Generalized IBU (GIBU)~\cite{EhabGIBU} applies IBU in parallel to several empirical distributions derived from the application of (possibly different) obfuscation mechanisms to various sets of samples from the same distribution.

\begin{definition}[Generalized iterative Bayesian update~\cite{EhabGIBU}]~\label{def:GIBU}

Let $\vb*{x}^{(1)},\ldots,\vb*{x}^{(N)}$, with $\vb*{x}^{(t)}=(x^{(t)}_1,\ldots,x^{(t)}_n)$ for every $t\in\{1,\ldots,N\}$, be $N$ datasets s.t. the entries $x^{(t)}_i$ for each $i\in\{1,\ldots,n\}$ are i.i.d. samples from the discrete space $\mathcal{X}$ following the probability distribution $\pi_{\mathcal{X}}$. Let $\mathcal{C}^{(1)},\ldots,\mathcal{C}^{(N)}$ be $N$ privacy mechanisms that locally obfuscate points from $\mathcal{X}$ to $\mathcal{Y}$ such that the mechanism $\mathcal{C}^{(t)}$ is applied to the dataset $\vb*{x}^{(t)}$ and $\mathcal{C}^{(t)}_{xy}=\mathbb{P}[y|x]$ 
for all $x,y\in \mathcal{X},\mathcal{Y}$ and $i\in\{1,\ldots,N\}$. Denoting the random variable of the output when $x^{(t)}_i$ is obfuscated with $\mathcal{C}^{(t)}$ as $Y^{(t)}_i$, let 
   $(y^{(t)}_1,\ldots,y^{(t)}_n)$ be a realisation of $(Y^{(t)}_1,\ldots,Y^{(t)}_n)$ for every $t\in\{1,\ldots,N\}$. 

Let $\mathcal{G}=\begin{pmatrix}
        \mathcal{C}^{(1)}&\ldots & \mathcal{C}^{(N)}
    \end{pmatrix}$ 
be referred to as the \emph{combined mechanism} a.k.a. the \emph{output probability matrix} satisfying:
\begin{align}
    &\mathcal{G}\left(x,y^{(t)}_i\right)=\mathbb{P}\left[\left.y^{(t)}_i\right| x\right]=\mathcal{C}^{(t)}_{x,y^{(t)}_i}\nonumber\\
    &\forall x\in\mathcal{X},\,i\in\{1,\ldots,n\}\nonumber.
\end{align} 
GIBU estimates $\pi_{\mathcal{X}}$ by converging to the maximum likelihood estimate (MLE) of $\pi_{\mathcal{X}}$ with the knowledge of the noisy data and the obfuscating channels. GIBU works as follows:
\begin{enumerate}
    \item Start with any full-support PMF $\theta_0$ on $\mathcal{X}$.
    \item Iterate $ \theta_{r+1}(x)=\frac{1}{Nn}\sum\limits_{t=1}^N\sum\limits_{i=1}^n\frac{\theta_r(x)\mathcal{G}\left(x,y^{(t)}_i\right)}{\sum\limits_{z\in\mathcal{X}}\theta_r(z)\mathcal{G}\left(z,y^{(t)}_i\right)}$ for all $x\in\mathcal{X}$.
\end{enumerate}
\end{definition}
Setting $\hat{\pi}_{\mathcal{X}}=\lim\limits_{r\to\infty}\theta_{r}$ and $\vb*{y}^{t}=(y^{(t)}_1,\ldots,y^{(t)}_n)$, let $\hat{\pi}_{\mathcal{X}}$ (the MLE of the prior obtained with GIBU) be functionally denoted by: $$\texttt{GIBU}\left(\left(\mathcal{C}^{(1)},\vb*{y}^{(1)}\right),\ldots, \left(\mathcal{C}^{(N)},\vb*{y}^{(N)}\right)\right).$$

\color{black}
\begin{definition}[Earth mover's distance~\cite{kantarovich}]~\label{def:EMD}
Let $\pi_1$ and $\pi_2$ be PMFs defined over a discrete space of locations $\mathcal{X}$. For a metric $d\colon\mathcal{X}^2\mapsto \mathbb{R}_{\geq 0}$, the \emph{earth mover's distance (EMD)} (aka the \emph{Kantorovich–Rubinstein metric}) is defined as $$EMD(\pi_1,\pi_2)=\min\limits_{\mu \in \Pi(\pi_1,\pi_2)} \mu(x,y)d(x,y)$$ where $\Pi(\pi_1,\pi_2)$ is the set of all joint distributions over $\mathcal{X}^2$ such that for any $\eta\in\Pi(\pi_1,\pi_2)$, $\sum\limits_{x\in\mathcal{X}}\eta(x_0,x)=\pi_1(x_0)$ and $\sum\limits_{x\in\mathcal{X}}\eta(x,x_0)=\pi_2(x)$ for every $x_0\in\mathcal{X}$.
\end{definition}

EMD is considered a canonical way to lift a distance on a certain domain to a distance between distributions on the same domain.

\begin{definition}[Statistical utility]~\label{def:statutil}
Let $\mathcal{C}$ be a privacy mechanism  that obfuscates  data on the discrete space $\mathcal{X}$. Let $\pi_{\mathcal{X}}$ be the PMF of the original locations and let $\hat{\pi}_{\mathcal{X}}$ be its estimate by IBU. Then we define the \emph{statistical utility} of the mechanism $\mathcal{C}$ as $EMD(\hat{\pi}_{\mathcal{X}},\pi_{\mathcal{X}})$.
\end{definition}

\subsection{Optimization of MI and QoS}


\begin{definition}[Blahut-Arimoto algorithm~\cite{Blahut72computationof,Arimoto1972AnAF}]~\label{def:BA}
Let $X$ be a random variable on the discrete space $\mathcal{X}$ with PMF $\pi_{\mathcal{X}}$ and $\vb*{C}(\mathcal{X},\mathcal{Y})$ be the space of all mechanisms encoding $\mathcal{X}$ to $\mathcal{Y}$. For a distortion $d\colon\mathcal{X}\times\mathcal{Y}\mapsto\mathbb{R}_{\geq 0}$ and fixed $d^*\in\mathbb{R}^+$, we wish to find the mechanism  $\hat{\mathcal{C}}\in\vb*{C}(\mathcal{X},\mathcal{Y})$ that  minimizes MI given the bound $d^*$ on distortion:
\begin{equation*}
    \hat{\mathcal{C}}=\argmin\limits_{\substack{\mathcal{C}\in\vb*{C}(\mathcal{X},\mathcal{Y})\\AvgD(X,\mathcal{C},d)\leq d^*}}{I(X|Y_{X,\mathcal{C}})}
\end{equation*}
where, for any $\mathcal{C}\in\vb*{C}(\mathcal{X},\mathcal{Y})$, $Y_{X,\mathcal{C}}$ is the random variable on $\mathcal{Y}$ denoting the output of the encoding of $X$. The \emph{Blahut-Arimoto algorithm (BA)} provides an iterative method to construct  $\hat{\mathcal{C}}$ as follows:

\begin{enumerate}
    \item Start with any full-support PMF  $c_{0}$ on $\mathcal{X}$ and any ${\mathcal{C}^{(0)}}$.
    \item Iterate: 
    \begin{equation}\label{eq:BAIt:Channel}
        \mathcal{C}^{(t+1)}_{xy}=\frac{c_t(y)\exp{-\beta d(x,y)}}{\sum\limits_{z\in\mathcal{Y}}c_t(z) \exp{-\beta d(x,z)}}
    \end{equation}
    \begin{equation}\label{eq:BAIt:Marginal}
        c_{t+1}(y)=\sum\limits_{x\in\mathcal{X}}\pi_{\mathcal{X}}(x)\mathcal{C}^{(t+1)}_{xy}
    \end{equation}
\end{enumerate}
where $\beta>0$ is the negative of the slope of the \emph{rate-distortion} function $RD(X,d^*)=\min_{\mathcal{C}\in\vb*{C}(\mathcal{X},\mathcal{Y})}I(X|Y_{X,\mathcal{C}})$ under $AvgD(X,\mathcal{C},d)\leq d^*$. We call $\beta$  the \emph{loss parameter}, capturing the role of $d^*$ in BA. 
\end{definition}

\begin{remark}\label{rem:BATransfom}
The equations \eqref{eq:BAIt:Channel}  and \eqref{eq:BAIt:Marginal} above define two transformations ${\mathcal{F}}: \vb*{D}(\mathcal{X})  \rightarrow \vb*{C}(\mathcal{X},\mathcal{Y})$ and $\mathcal{G}: \vb*{C}(\mathcal{X},\mathcal{Y}) \rightarrow \vb*{D}(\mathcal{X})$, where $\vb*{D}(\mathcal{X})$ is the space of distributions on $\mathcal{X}$,  so that 
$\mathcal{C}^{(t+1)}= {\mathcal F}(c_t)$ and
$c_{t+1} = {\mathcal{G}}(\mathcal{C}^{(t+1)})$.
\end{remark} 

\begin{remark}\label{rem:BAConverge}
In \cite{csizar}, Csisz\'ar proved the convergence of BA when $\mathcal{X}$ is finite.  The limit $\lim_{n\rightarrow \infty} (\mathcal{F} \circ \mathcal{G})^n(\mathcal{C}^{(0)})$ is the optimal mechanism $\hat{\mathcal{C}}$ (parametrized by $\beta$), and it is uniquely determined by the prior $\pi_{\mathcal{X}}$ and by the initial PMF $c_0$.
Note that $\hat{\mathcal{C}}$ is  a fixpoint of $\mathcal{F} \circ \mathcal{G}$, i.e. $\hat{\mathcal{C}} = (\mathcal{F} \circ \mathcal{G})(\hat{\mathcal{C}})$, and that $\hat{c} = \mathcal{G}(\hat{\mathcal{C}})$ is a fixpoint of $\mathcal{G} \circ \mathcal{F}$.
\end{remark}

\begin{remark}~\label{rem:BAgeoind}
In \cite{Oya:17:CCS}, Oya et al. proved that, when $d$ is the Euclidean metric, the  mechanism $\hat{\mathcal{C}}$ obtained from BA with loss parameter $\beta$ satisfies $2\beta$-geo-ind.
\end{remark}

In the context of the location-privacy, as addressed in this work, we obfuscate the original locations to points in the same space and, hence, for the rest of the paper we consider the spaces of the secrets and the noisy locations to be the same, i.e., $\mathcal{X}=\mathcal{Y}$.

\begin{table}
\caption{Key notations}\label{table:notations}
\centering
\resizebox{1\columnwidth}{!}{%
\begin{tabular}{|c|c|}
\hline
Notation & Meaning \\ \hline
$\mathcal{X}$  & Finite space of locations\\
$\vb*{C}(\mathcal{X},\mathcal{Y})$ & Space of all mechanisms encoding $\mathcal{X}$ to $\mathcal{Y}$\\
$d^*$ & Maximum average distortion\\
$\beta$ & Loss parameter of RD function\\
$\pi_{\mathcal{X}}$ & distribution of the original locations (true prior)\\
$\hat{\pi}_{\mathcal{X}}$ & Estimation of the true prior\\
BA & Blahut-Arimoto algorithm\\
IBU & Iterative Bayesian Update\\
$\delta_{\text{BA}} $ & Precision parameter for BA to converge\\
$\delta_{\text{IBU}} $ & Precision parameter for IBU to converge\\
$N$ & Number of iterations of PRIVIC\\
$\hat{\mathcal{C}}_{\text{BA}}(\theta,N)$ & Mechanism produced by PRIVIC
\\\hline
\end{tabular}%
}
\end{table}

\section{Location-privacy with the Blahut-Arimoto algorithm}~\label{sec:LPPM_with_BA}

Definition~\ref{def:BA} shows that the BA mechanism optimizes between MI and average distortion, which is a standard choice for measuring QoS. Furthermore, Remark~\ref{rem:BAgeoind} formally links the mechanism produced by BA with geo-ind, which is our reference privacy notion.

In this section, we investigate the privacy protection offered by BA beyond  geo-ind, study the statistical utility it renders, and compare it with LAP, the canonical  mechanism for geo-ind.

\subsection{Elastic location-privacy with BA}~\label{sec:BAElastic}

One of the concerns harboured by geo-ind is that it treats the space in a uniform way, thus making isolated locations vulnerable to an attacker that knows the prior distribution. This issue has been raised and addressed by Chatzikokolakis et al. in \cite{Elastic_Kostas} where the authors introduce  a variant of LAP based on an \emph{elastic distinguishability metrics}, which they refer to as \emph{elastic mechanisms}. Such mechanisms obfuscate locations not only by considering the Euclidean distance between them but also by taking into account an abstract attribute of the reported location, called \emph{mass}, which is a parameter of the definition.  

Formally, if $\mathcal{R}_{\text{elas}}$ is an elastic mechanism with privacy parameter $\epsilon$ defined on  $\mathcal{X}$, then, for all $x,y\in \mathcal{X}$,
$\mathcal{R}_{\text{elas}}$ must satisfy:
\begin{align}
    \mathbb{P}[\mathcal{R}_{\text{elas}}(x)=y]\propto \exp{-\epsilon d_{ \text{E}}(x,y)}~\label{eq:elasticprop1}\\
    \mathbb{P}[\mathcal{R}_{\text{elas}}(x)=y]\propto q(y)~\label{eq:elasticprop2}
\end{align}
\revision{where $q$ is the probability distribution of the reported locations}.

\revision{Note that Equations \ref{eq:elasticprop1} and \ref{eq:elasticprop2} characterize the properties of an elastic mechanism  $\mathcal{R}_{\text{elas}}$, but they \emph{do not define  what $\mathcal{R}_{\text{elas}}$ exactly is, as a function.} In fact, as a definition,  Equation~\ref{eq:elasticprop2} would be circular, since it uses the probability mass $q$ generated by $\mathcal{R}_{\text{elas}}$ without knowing what $\mathcal{R}_{\text{elas}}$ is. 
As we will see, BA solves this problem
by constructing the mechanism $\mathcal{R}_{\text{elas}}$
as a fixpoint of a recursive process starting from a uniform output distribution $q$. (To be precise the process is mutually recursive, alternating the generation of a new mechanism and a new output distribution, that, in turn, is fed into BA to generate the mechanism at the next step.)}


$\mathcal{R}_{\text{elas}}$, unlike LAP, protects a point in a densely populated area (e.g. city) and a geo-spatially isolated point (e.g. island) differently by considering not only the ground distance between the true and the reported locations but also the  mass of the reported location. 
The exact mechanism depends of course on how we define the notion of mass. 
A natural way, and the  most meaningful from the privacy point of view, is to set the mass of $y$ to be the probability to be reported (from any true location $x$). 
Under this definition, the interpretation  of \eqref{eq:elasticprop2} is in the spirit of
obtaining privacy by ensuring that the set of possible true locations (given the reported one) is large. In other words,  
given a  true location $x$, we tend to report with higher probability those locations $y$ that are reported with high probability from other locations as well so that it becomes harder to re-identify $x$ as the original one. Note that this property   is not incompatible with  the geo-ind guarantee. However, LAP does not provide it.   

Obviously, the definition of mass as the probability to be reported would be circular, because it would depend on the mechanism, which in turn is defined in terms of the mass. 
The authors of  \cite{Elastic_Kostas} do not explain how this mechanism could be constructed. Fortunately, the following theorem shows that an elastic mechanism of this kind can be constructed using BA.  The proof is provided in Appendix~\ref{app:proofs}.

\begin{restatable}{theorem}{BAElastic}~\label{th:BAElastic}
The privacy mechanism generated by BA produces an elastic location-privacy mechanism.
\end{restatable}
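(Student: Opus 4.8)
The plan is to show that the limiting channel $\hat{\mathcal{C}} = \lambda_{\text{BA}}(\pi_{\mathcal{X}}, \mathcal{C}^{(0)})$ satisfies the two defining proportionality relations \eqref{eq:elasticprop1} and \eqref{eq:elasticprop2} of an elastic mechanism, with the mass of a reported location $y$ being its probability mass $q(y)$ under the output distribution. The key observation is that the BA iteration \eqref{eq:BAIt:Channel} is a fixed-point iteration, so at convergence the limiting channel $\hat{\mathcal{C}}$ must satisfy the fixed-point equation obtained by setting $\mathcal{C}^{(t+1)}_{xy} = \mathcal{C}^{(t)}_{xy} = \hat{\mathcal{C}}_{xy}$ and, consequently, $c_t(y) = \sum_x \pi_{\mathcal{X}}(x)\hat{\mathcal{C}}_{xy} = q(y)$, where $q$ is precisely the output distribution of $\hat{\mathcal{C}}$ on input $\pi_{\mathcal{X}}$. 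This yields
\begin{equation*}
    \hat{\mathcal{C}}_{xy} = \frac{q(y)\exp(-\beta d(x,y))}{\sum_{z\in\mathcal{X}} q(z)\exp(-\beta d(x,z))}.
\end{equation*}

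From this closed form the two elastic properties follow by inspection. Fixing $x$ and varying $y$, the denominator is a constant (depending only on $x$), so $\mathbb{P}[\hat{\mathcal{C}}(x) = y] = \hat{\mathcal{C}}_{xy} \propto q(y)\exp(-\beta d(x,y))$. Reading this as proportional in the exponential factor gives a relation of the form \eqref{eq:elasticprop1} (with privacy parameter $\epsilon = \beta$, consistent with Remark~\ref{rem:BAgeoind}), and reading it as proportional in $q(y)$ gives \eqref{eq:elasticprop2}. Thus the reported-location mass enters multiplicatively exactly as the elastic-mechanism definition demands, and one concludes that $\hat{\mathcal{C}}$ is an elastic mechanism; moreover since the Euclidean-distance factor is retained, geo-indistinguishability (Remark~\ref{rem:BAgeoind}) is preserved simultaneously, matching the informal claim that elasticity is "not exclusive from" geo-indistinguishability.

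The main obstacle is justifying that the fixed point of the BA iteration actually has the stated form — specifically, that $c_\infty(y)$ equals the output marginal $q(y)$ of the converged channel. This requires invoking the convergence of BA on finite $\mathcal{X}$ (Remark~\ref{rem:BAConverge}, Csiszár), so that $\mathcal{C}^{(t)} \to \hat{\mathcal{C}}$ and hence $c_t(y) \to \sum_x \pi_{\mathcal{X}}(x)\hat{\mathcal{C}}_{xy}$ by continuity, and then passing to the limit in \eqref{eq:BAIt:Channel} to obtain the fixed-point identity. A secondary subtlety is the definition of $q$: in \eqref{eq:elasticprop2} the paper describes $q$ as "the distribution of the observed locations privatized with $\mathcal{R}_{\text{elas}}$," which is the pushforward of $\pi_{\mathcal{X}}$ through $\hat{\mathcal{C}}$ — so I would state explicitly that $q(y) := \sum_{x} \pi_{\mathcal{X}}(x)\hat{\mathcal{C}}_{xy}$ and note the self-referential consistency (this $q$ is a genuine fixed-point marginal, not an externally imposed one). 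Beyond these points the argument is a short calculation; no heavy machinery is needed once the fixed-point form is in hand.
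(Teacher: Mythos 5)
Your proposal is correct and takes essentially the same route as the paper's proof: both arguments reduce the claim to observing that the BA limiting channel has the closed form $\hat{\mathcal{C}}_{xy}=q(y)\exp(-\beta d(x,y))/\sum_{z}q(z)\exp(-\beta d(x,z))$ with $q$ the output marginal, from which the two proportionality relations \eqref{eq:elasticprop1} and \eqref{eq:elasticprop2} follow by inspection. If anything, your explicit fixed-point/convergence justification of that closed form (via Remark~\ref{rem:BAConverge} and passing to the limit in \eqref{eq:BAIt:Channel}) is more careful than the paper's, which simply remarks that the sufficient condition ``is essentially the iterative step of BA.''
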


Note also that there can be many mechanisms satisfying \eqref{eq:elasticprop1} and \eqref{eq:elasticprop2} (also with the mass interpreted as probability). The one produced by the BA is the mechanism that offers the best QoS among these. Finally, a consequence of the connection with BA is that it provides an understanding of the elastic mechanism in terms of information theory and of the attacker illustrated in the previous section.

\subsubsection*{Experimental validation}

Having furnished the theoretical foundation, we now enable ourselves to empirically validate that BA, indeed, satisfies the properties of the elastic mechanism unlike LAP, its state-of-the-art geo-indistinguishable counterpart. We perform experiments using real location data from the Gowalla dataset~\cite{Gowalla:online, cho2011friendship}. We consider 10,078 Gowalla check-ins from a central part of Paris bounded by latitudes (48.8286, 48.8798) and longitudes (2.2855, 2.3909) covering an area of 8Km$\times$6Km discretized with a $16\times12$ grid. 

\begin{wrapfigure}{L}{0.6\columnwidth}
\vspace{-12pt}
\centering
\includegraphics[width=0.7\columnwidth]{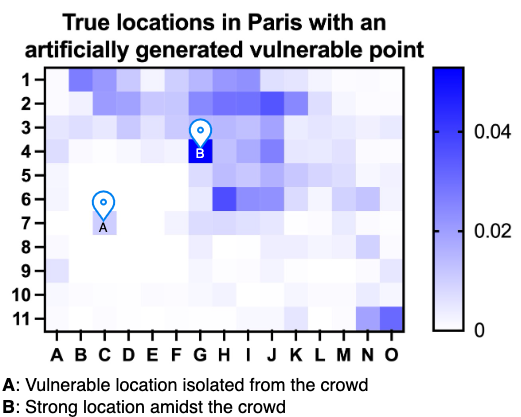}
\caption{Gowalla check-in locations in Paris with an artificially planted vulnerable point, $A$, in isolation, and a strong point, $B$, in a crowded area.}
\label{fig:str_vul}
\vspace{-8pt}
\end{wrapfigure}

In order to demonstrate the property of an elastic mechanism, we artificially introduced an ``island'' amidst the locations in Paris by choosing a grid $A$ in a low-density area of the dataset (in the south-west region), assigning the probability mass of the grids around $A$ to 0, and dumping this cumulative mass from the surrounding region to $A$, ensuring that the sum of the probability masses of all the grids remains to be 1. We call $A$ as a \emph{vulnerable location} in the map as it is isolated from the crowded area. To visualize the elastic behaviour of the mechanisms for locations in crowded regions, we consider another grid $B$ in the central part of the map which has a high probability mass and has a highly populated surrounding -- we refer to such a grid $B$ as a \emph{strong location} in the map. Figure~\ref{fig:str_vul} illustrates the selection of vulnerable and strong locations in the Paris dataset.

\begin{figure*}[htbp]
\centering
\begin{subfigure}[b]{1\columnwidth}
   \includegraphics[width=1\columnwidth]{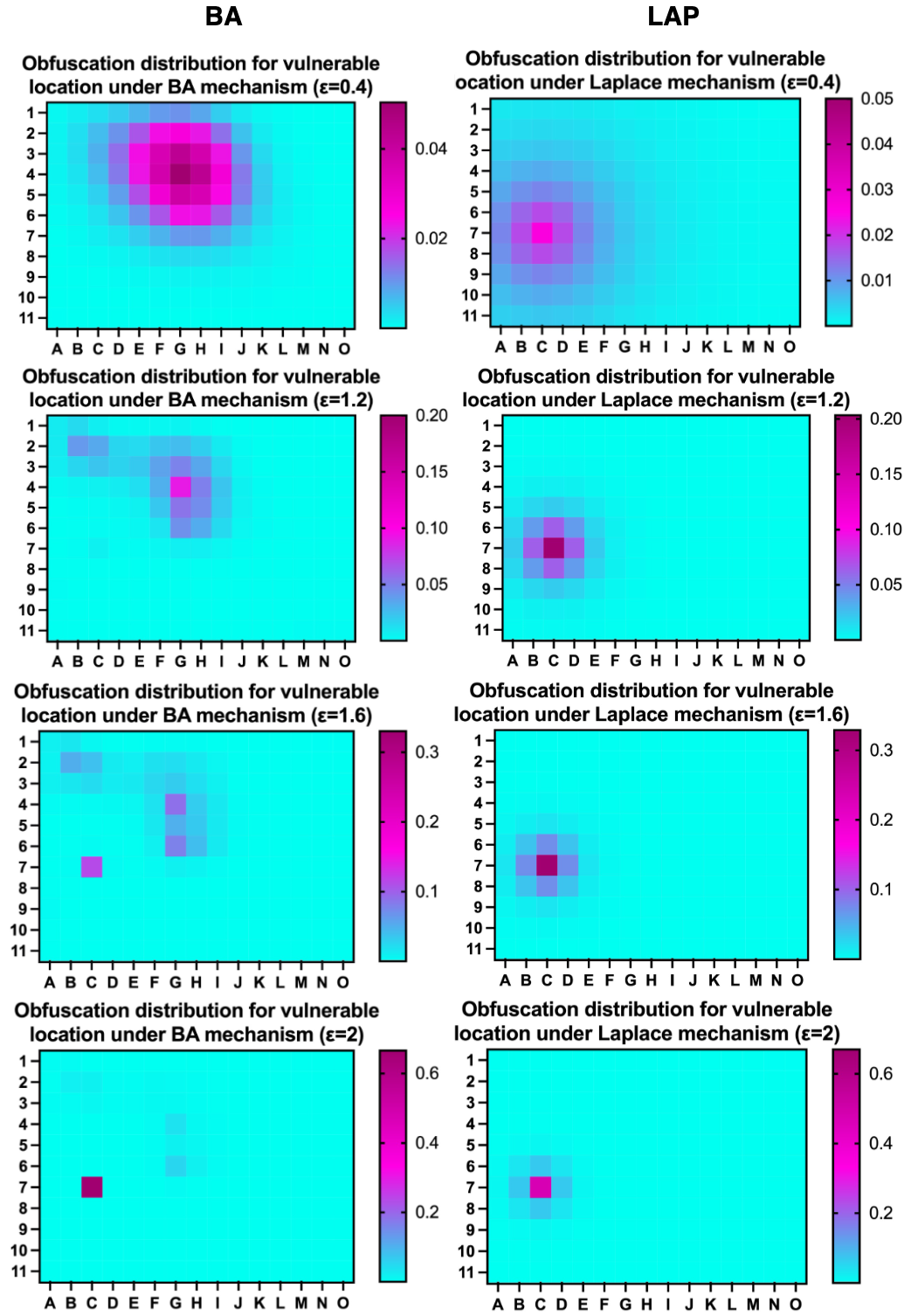}
   \caption{Reporting distribution of the vulnerable location $A$}
   \label{fig:Elastic_vul} 
\end{subfigure}
\begin{subfigure}[b]{1\columnwidth}
   \includegraphics[width=1\columnwidth]{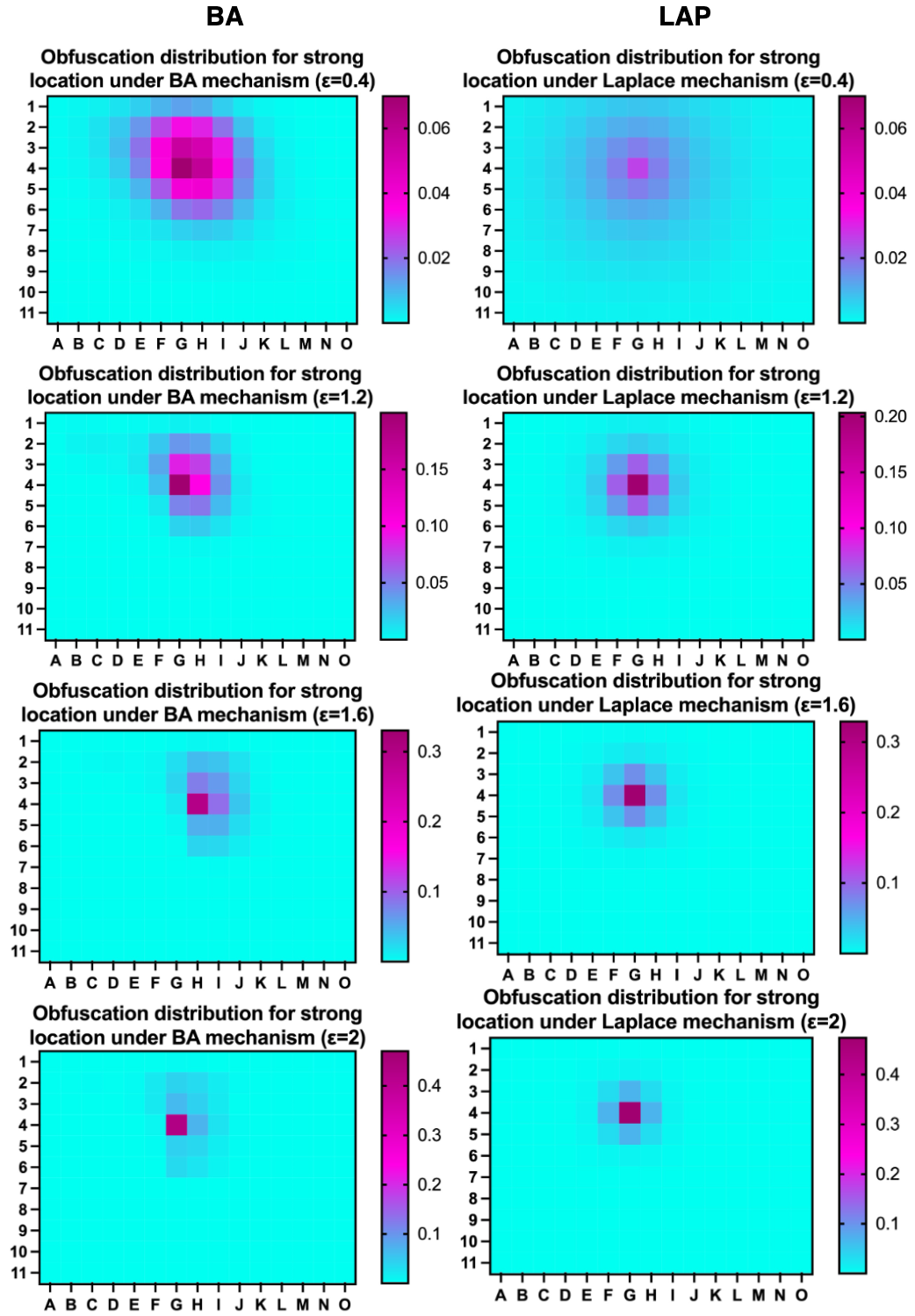}
   \caption{Reporting distribution of the strong location $B$}
   \label{fig:Elastic_str}
\end{subfigure}
\caption{Distribution of privatizing the vulnerable and the strong locations for different levels of privacy. Top-down, the rows illustrate the results for $\epsilon=0.4,1.2,1.6,2$, respectively.}
\label{fig:ElasticBAvLap}
\end{figure*}

For the mechanism derived from BA with a loss parameter $\beta$, we know, by Remark~\ref{rem:BAgeoind}, that the privacy parameter $\epsilon$ is $2\beta$, which we use to tune the privacy level of LAP in order to 
compare the two mechanisms under the same level of geo-ind. Figure~\ref{fig:ElasticBAvLap} illustrates the probability distribution of reporting a privatized point on the map by obfuscating the vulnerable and the strong locations with different levels of geo-ind -- we vary the value of $\epsilon$ to be $0.4,1.2,1.6,2$.

By comparing with the distribution of the true locations in Paris given by Figure~\ref{fig:str_vul}, we observe that when the value of $\epsilon$ is low (privacy is high), the reported location with BA is likely to be mapped to a nearby densely populated place. For example, with $\epsilon=0.2$, the highest level of privacy considered in the experiments, the location reported by BA will most probably be around the most crowded region of Paris. As $\epsilon$ increases, the location most likely to be reported by BA systematically moves to a densely populated region closer and closer to the true vulnerable location. LAP, on the other hand, always obfuscates every location around its true position in the map -- varying the value of $\epsilon$ changes the spread of the distribution around the true location. As explained in the introduction, this might be problematic as the vulnerable location is known to be isolated and, hence, even being reported somewhere nearby would potentially result in its re-identification.  

For example, we would like to highlight the setting of $\epsilon=1.6$ for the vulnerable location to show that the distribution of the location reported by LAP is almost completely around the true vulnerable point covering an area that is deserted, i.e., there is no realistic chance of someone being located in that region. Thus, despite providing formal $1.6$-geo-ind, LAP fails to protect such a vulnerable location from being potentially identified. BA, on the other hand, does the job quite well, adhering to the principles of the elastic mechanism -- it distributes the reported location in the crowded areas nearby providing a sense of camouflage amidst the many possibilities, in addition to $1.6$-geo-ind. 

In the case of privatizing the strong location, Figure~\ref{fig:Elastic_str} shows that both BA and LAP behave similarly by concealing the point around its true position. This would not give rise to a similar issue as for the vulnerable location because, by definition, the strong location $B$ is already positioned in a highly dense region of the map and, hence, being privatized, it will still remain among the crowd with a high probability. 

Focusing on the utility of individual users, we note that due to theories from Nash equilibrium~\cite{nash} and Hotelling's spatial competition~\cite{GALOR_Hotelling}, a huge fraction of the typical points of interest (POIs) like cinemas, theatres, restaurants, retails, etc. lie in crowded areas syncing with the distribution of population. Therefore, for an isolated point in the map that is located in some extremely unpopulated area (e.g. some forest or island far from the city), the closest POI is usually going to be in the nearest urban region, i.e., a region on the map with a high density of population. Suppose $A$ is one such isolated location and let $A_{\text{BA}}$ and $A_{\text{LAP}}$ be the reported locations for $A$ obfuscated with BA and LAP, respectively. Due to the elastic property of BA, $A_{\text{BA}}$ is likely to be at a nearby crowded location to $A$, while $A_{\text{LAP}}$ is likely to be around the true location $A$. Let $P_{\text{BA}}$ and $P_{\text{LAP}}$ be the nearest POIs from the reported locations $A_{\text{BA}}$ and $A_{\text{LAP}}$, respectively. The most likely scenario is that $P_{\text{BA}}$ and $P_{\text{LAP}}$ are almost at a similar place under the assumption that typical POIs follow the distribution of the crowd and, therefore, a vulnerable user has to travel a similar distance from their true position in both the cases, except that under LAP, the privacy of $A$ will be compromised much more than that   under BA.

\subsection{Statistical utility: BA vs LAP}~\label{sec:BAvsLapStatUtil}

\begin{figure}[htbp]
\centering
\begin{subfigure}[b]{1\columnwidth}
   \includegraphics[width=0.9\columnwidth]{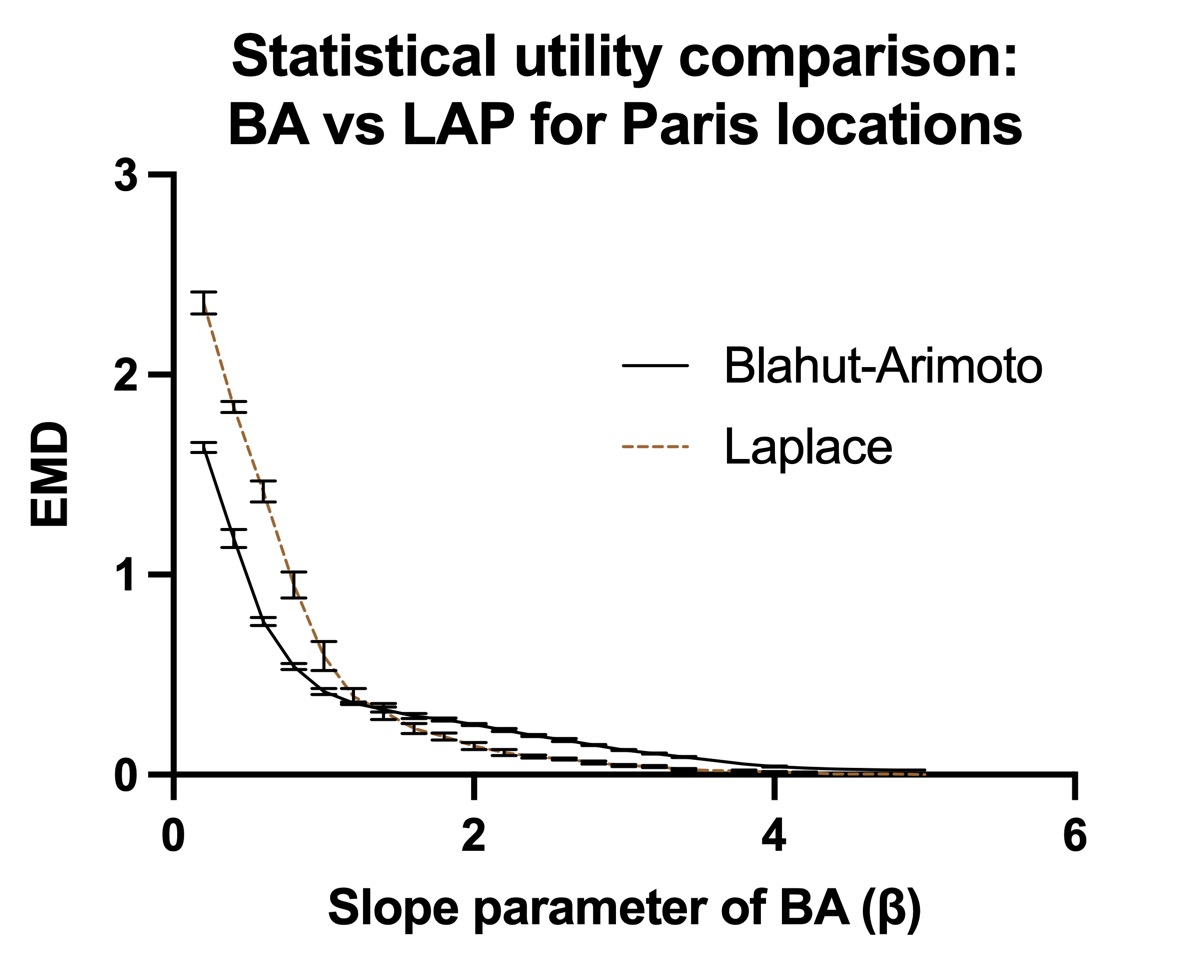}
   \caption{Statistical utility for BA and Laplace on locations in Paris}
   \label{fig:StatUtilBAvsLapParis} 
\end{subfigure}
\begin{subfigure}[b]{1\columnwidth}
\includegraphics[width=0.9\columnwidth]{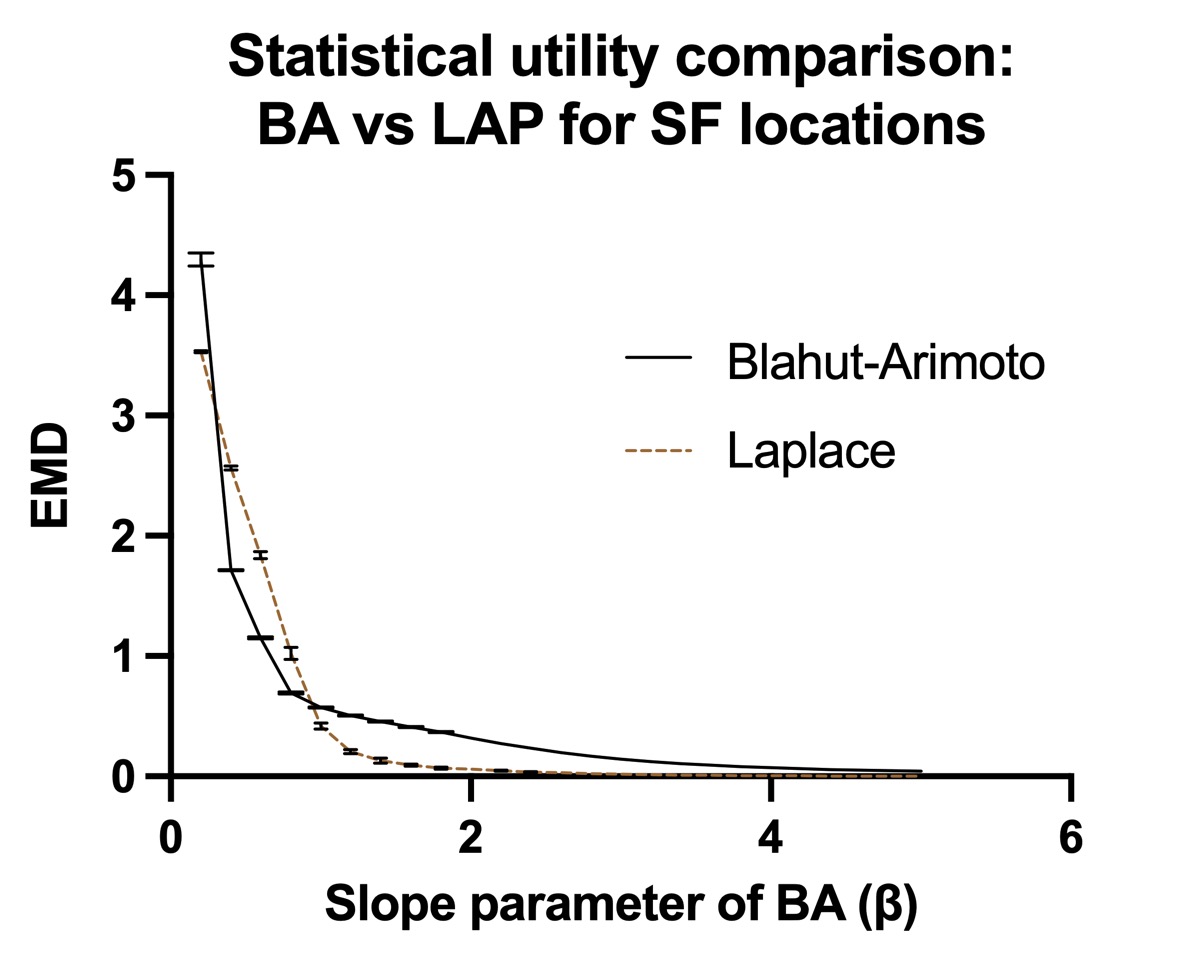}
   \caption{Statistical utility for BA and Laplace on locations in SF}
   \label{fig:StatUtilBAvsLapSF}
\end{subfigure}
\caption{Statistical utility in terms of \emph{earth mover's distance} (EMD) between the true and the estimated distributions for locations in Paris and San Francisco, under BA and LAP.}
\label{fig:StatUtilBAvsLap}
\end{figure}
Now we proceed to empirically compare the statistical utility of BA and LAP by performing experiments on the locations obtained from the Gowalla dataset for two different cities: Paris and San Francisco. In addition to the same setting for the Gowalla check-ins in Paris as considered in the experiments of Section~\ref{sec:BAElastic}, here we also test for 123,025 check-in locations from the Gowalla dataset in a northern part of San Francisco bounded by latitudes (37.7228, 37.7946) and longitudes (-122.5153, -122.3789) covering an area of 12Km$\times$8Km discretized with a 24$\times$17 grid. The locations were privatized with BA and LAP under varying levels of privacy -- the loss parameter, $\beta$, for BA ranged from $0.2$ to $5.0$, which implies that the value of the geo-ind parameter, $\epsilon$, ranged from $0.4$ (very high level of privacy) to $10.0$ (almost no privacy). To account for the randomness in the process of generating the sanitized locations, $5$ simulations were run for each value of the privacy parameter for obfuscating every location in both datasets.

Figure~\ref{fig:StatUtilBAvsLap} reveals that BA possesses a significantly better statistical utility than LAP for a high level of privacy (for $\beta\in (0.4,1.4]$ and $\beta\in(0,1)$, i.e., $\epsilon$ up to $2.8$ and $2$, in Paris and San Francisco datasets, respectively). As the level of privacy decreases, the EMD 
of BA becomes worse than that of LAP. We conjecture that this is the price to pay for the added privacy provided by the elasticity of the mechanism. Eventually, the EMD  between the true and the estimated PMFs converge to $0$ in both mechanisms, as we would expect, fostering the maximum possible statistical utility with, practically, no privacy guarantee.

Summarizing the results from Sections~\ref{sec:BAElastic} and~\ref{sec:BAvsLapStatUtil}, we can establish that:
\begin{itemize}
    \item in addition to providing a formal geo-ind guarantee, BA also gives an LPPM with an elastic distinguishability metric to enhance the privacy of vulnerable locations.
    \item BA optimizes the trade-off between  QoS and MI.
    \item the statistical utility for high levels of privacy is significantly better for BA than LAP.
\end{itemize}

Therefore, we conclude that BA is a key contender for providing a comprehensive notion of location privacy while preserving the utility of the data for both the users and the service providers.

\section{Duality between IBU and BA}~\label{sec:duality}

We now explore a relationship between BA and IBU which we found rather intriguing. 
For a metric space $(\mathcal{X},d)$, let $X$ be a random variable on $\mathcal{X}$ with PMF $\pi_{\mathcal{X}}$.
Recalling the iteration of BA from \eqref{eq:BAIt:Marginal} and \eqref{eq:BAIt:Channel}:
$$c_t(y)=\sum_{x\in\mathcal{X}}\pi_{\mathcal{X}}(x)\mathcal{C}^{(t)}_{xy} \text{ and }C^{t+1}_{xy}=\frac{c_t(y)\exp{-\beta d(x,y)}}{\sum_{z\in\mathcal{X}}c_t(z) \exp{-\beta d(x,z)}}$$

Hence, we obtain:
\begin{align}
c_{t+1}(y)=\sum_{x\in\mathcal{X}}\pi_{\mathcal{X}}(x)\mathcal{C}^{(t+1)}_{xy}=\sum_{x\in\mathcal{X}}\pi_{\mathcal{X}}(x)\frac{c_t(y)\exp{-\beta d(x,y)}}{\sum_{z\in\mathcal{X}}c_t(z) \exp{-\beta d(x,z)}}\label{eq:dual1}
\end{align}

\begin{wrapfigure}{L}{0.5\columnwidth}
\vspace{-12pt}
\centering
\includegraphics[width=0.6\columnwidth]{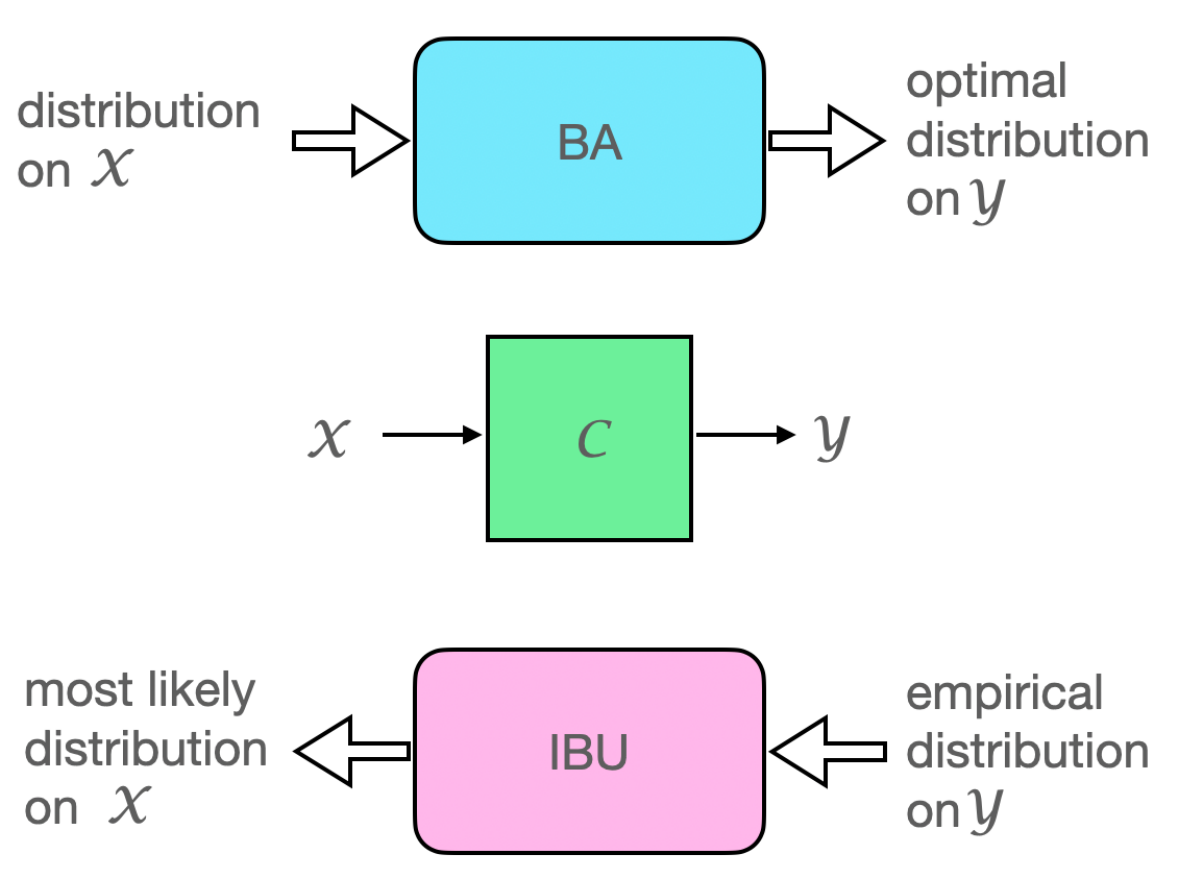}
\caption{Illustration of the duality between BA and IBU.}
\label{fig:BA-IBU}
\vspace{-10pt}
\end{wrapfigure}
Comparing it with the iteration of IBU as in Definition \ref{def:IBU}, we observe that \eqref{eq:dual1} BA is dual to IBU. Indeed,  consider an exponential mechanism of the form  $\mathcal{C} =c\exp{-\beta d(x,y)}$.
Flipping the roles of $x$ and $y$ in \eqref{eq:dual1}, and replacing the input distribution  $\pi_{\mathcal{X}}$ with the empirical distribution in output to  $\mathcal{C}$, we obtain the iterative step of IBU. 

Due to this duality between BA and IBU (illustrated in Figure~\ref{fig:BA-IBU}) and taking advantage of the fact that BA converges~\cite{csizar}, i.e., $\lim_{t\to\infty}c_t$ exists, we obtain 
that also IBU converges.

\section{PRIVIC: a privacy-preserving method for incremental data collection}~\label{sec:PRIVIC}

To ensure that the produced mechanism is truly optimal,  BA needs a good approximation of the prior distribution. 
In the beginning, we cannot assume to have such knowledge, but as the service providers incrementally collect data from their users, we can use these data to refine the estimation of the prior and get a better mechanism. These data, however, are obfuscated by the privacy mechanism and, hence, it is not obvious that the estimation of the prior really improves in the process.  We show that this is the case, and, summarizing all results obtained for BA so far, we propose a method that facilitates the service providers to incrementally collect data and gradually achieve a high statistical utility with respect to the QoS. We shall refer to our proposed method for \textbf{PRIV}acy-preserving \textbf{I}ncremental \textbf{C}ollection of location data as \emph{PRIVIC}.  

\revision{The goal of PRIVIC is to construct an obfuscation mechanism that guarantees formal geo-ind, acts as an elastic mechanism, and eventually optimizes between MI and QoS, while producing, at the same time, a good estimation of the distribution on the data.}

We shall consider locations sampled from a finite space $\mathcal{X}=\{x_1,\ldots,x_m\}$. Let the \emph{true distribution} or \emph{true PMF} on $\mathcal{X}$ (from which the users' locations are sampled) be $\pi_{\mathcal{X}}$. Note that \emph{we do not assume the knowledge of $\pi_{\mathcal{X}}$ in our method}. 
We assume that the new locations are sampled independently from the previous ones. This hypothesis is reasonable if the collection of the new data is enough separated in time from the previous one, otherwise, we would have a potential correlation between samplings due to the possibility that a user sends repeated check-ins from spatially closed locations. In any case,  
geo-ind, like DP, satisfies the property of sequential compositionality~\cite{galli2022group}, which means that privacy degradation is under control. 

In this work, to achieve geo-ind, we shall adhere to the Euclidean metric $d_{\text{E}}$ to measure the ground distance between locations.  

PRIVIC proceeds as follows (cf. also Figure~\ref{fig:privic}): 

\begin{enumerate}
    \item[1.] \revision{Set  $\theta_0, c_0$ to be the uniform distributions on $\mathcal{X}$, i.e., $\theta_{0}(x) = c_0(x)=1/|\mathcal{X}|$ for all $x\in\mathcal{X}$.}. 
    \item[2.] In step $t\geq 1$:
    \begin{enumerate}
        \item[i)] For a fixed the maximum average distortion, set $\hat{\mathcal{C}}^{(t)}=\texttt{BA}\left(\theta_{t-1},c_0\right)$.
        \item[ii)] Sample a new set of locations $\vb*{x}^{(t)}$ from the (unknown) true distribution and obfuscate
        them locally by the mechanism $\hat{\mathcal{C}}^{(t)}$ to get $\vb*{y}^{(t)}$, thus obtaining  the empirical distribution of the reported locations $\vb*{q}_t=\{\vb*{q}_t(x)\colon x\in\mathcal{X}\}$.
         \item[iii)] $\mu_t=\texttt{IBU}\left(\hat{\mathcal{C}}^{(t)},\theta_{t-1},\vb*{q}_t\right)$.
         \item[iv)] if $t=1$ then $\theta_t=\mu_t$ else $\theta_t=\mu_t\oplus \theta_{t-1}$ (combination of previous and new estimation proportional to the respective number of samples).
    \end{enumerate}
    \vspace{-0.35cm}
    \revision{\item[3.] $\hat{\pi}_{\mathcal{X}}=\texttt{GIBU}\left(\left(\hat{\mathcal{C}}^{(1)},\vb*{y}^{(1)}\right),\ldots, \left(\hat{\mathcal{C}}^{(N)},\vb*{y}^{(N)}\right)\right).$}
\end{enumerate}

\begin{figure}[htbp]
\centering
\includegraphics[width=0.9\columnwidth]{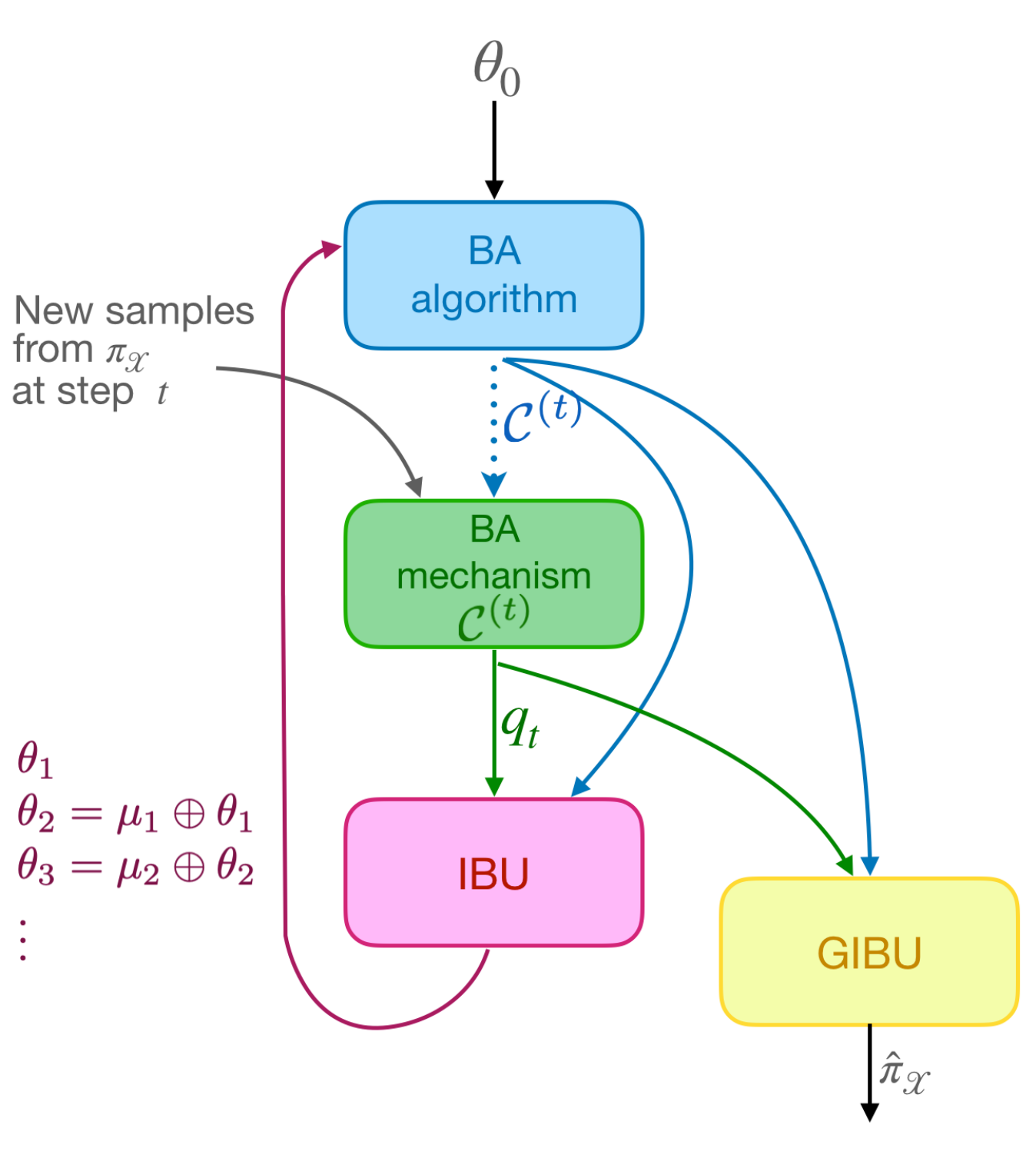}
\caption{Illustration of the iterative process of PRIVIC}
\label{fig:privic}
\end{figure}

\begin{algorithm}[ht]
\SetAlFnt{\small}
 \textbf{Input:} 
Loss parameter: $\beta$, No. of iterations: $N$, precision of BA: $\delta_{\text{BA}}$, precision of IBU: $\delta_{\text{IBU}}$, precision of GIBU: $\delta_{\text{GIBU}}$\;
\textbf{Output:} 
\revision{Optimal channel: $\hat{C}$,}
Estimation of true PMF:$\hat{\pi}_{\mathcal{X}}$\; 
\revision{$\theta_0(x) \leftarrow \nicefrac{1}{|{\mathcal{X}}|}$\;
$c_0 \leftarrow \nicefrac{1}{|{\mathcal{X}}|}$\;}
$t\leftarrow 0$\;
\While {$t \leq N$}
    {
    $\hat{\mathcal{C}}^{(t+1)}=\Call{\texttt{BA}}{\theta_{t},c_0,\beta,\delta_{\text{BA}}}$\;
    $\vb*{y}^{(t)}\leftarrow\left(y^{(t)}_1,\ldots,y^{(t)}_n\right)$: New noisy locations reported by users after obfuscating their newly sampled  true locations with $\hat{\mathcal{C}}^{(t)}$\;
    $\vb*{q}\leftarrow\{q(x)\colon x\in\mathcal{X}\}$: Empirical PMF obtained from $\mathcal{L}$ by the service provider\;
    $\mu\leftarrow \Call{\texttt{IBU}} {\hat{\mathcal{C}}^{(t+1)},\theta_{t},\vb*{q},\delta_{\text{IBU}}}$\;
     if $t=0$ then
    $\theta_{t+1}\leftarrow \mu$ else $\theta_{t+1}\leftarrow \mu \oplus \theta_t$\;
    $t\leftarrow t+1$\;
    }
 \revision{
$\hat{\pi}_{\mathcal{X}}\leftarrow\texttt{GIBU}\left(\left(\hat{\mathcal{C}}^{(1)},\vb*{y}^{(1)}\right),\ldots, \left(\hat{\mathcal{C}}^{(N)},\vb*{y}^{(N)}\right),\delta_{\text{GIBU}}\right)$\;
$\hat{\mathcal{C}}\leftarrow \text{BA}(\hat{\pi}_{\mathcal{X}}, c_0, \beta, \delta_{\text{BA}})$\;
}
\textbf{Return:} \revision{$\hat{\mathcal{C}}$,} $\hat{\pi}_{\mathcal{X}}$
\caption{PRIVIC}\label{alg:PRIVIC}
\end{algorithm}

\begin{algorithm}
\SetAlFnt{\small}
\textbf{Input:} PMF: $\pi$, initial mechanism: $\mathcal{C}^{(0)}$, loss parameter: $\beta$, precision: $\delta_{\text{BA}}$\;
\textbf{Output:} mechanism giving minimum mutual information for maximum avg. distortion encapsulated by $\beta$: $\hat{\mathcal{C}}$\;
\SetKwFunction{Fmain}{BA}
\SetKwProg{Fn}{Function}{:}{}
\Fn{\Fmain{$\pi,c_0,\beta,\delta_{\text{BA}}$}}{ 
$t\leftarrow 0$\;
\While{$\delta_{\text{BA}} \leq |C^{(t)} - C^{(t-1)}|$}
    {
    $\mathcal{C}^{(t+1)}_{xy}\leftarrow \frac{c_t(y)\exp{-\beta d_{\text{E}}(x,y)}}{\sum\limits_{z\in\mathcal{X}}c_t(z)\exp{-\beta d_{\text{E}}(z,y)}}$\;
    $c_{t+1}(y)\leftarrow\sum\limits_{x\in\mathcal{X}}\pi(x)\mathcal{C}^{(t+1)}_{xy}$\;
    $t\leftarrow t+1$
    }
$\hat{\mathcal{C}}\leftarrow \mathcal{C}^{(t)}$\;
\textbf{Return:} $\hat{\mathcal{C}}$
}
\caption{Blahut-Arimoto algorithm (BA)}\label{alg:BA}
\end{algorithm}

\begin{algorithm}
\SetAlFnt{\small}
\textbf{Input:} Privacy mechanism: $\mathcal{C}$, Full-support PMF: $\vartheta_0$, empirical PMF from observed data: $\vb*{q}$, precision: $\delta_{\text{IBU}}$\;
\textbf{Output:} MLE of true PMF: $\theta$\;
\SetKwFunction{Fmain}{IBU}
\SetKwProg{Fn}{Function}{:}{}
\Fn{\Fmain{$\mathcal{C},\vartheta_0,\vb*{q},\delta_{\text{IBU}}$}}{ 
Set $t\leftarrow 0$\;
\While{$\delta_{\text{IBU}}< |\vartheta_t-\vartheta_{t-1}|$}
{
    $\vartheta_{t+1}(x)\leftarrow\sum\limits_{y\in\mathcal{X}}\vb*{q}(y)\frac{\mathcal{C}_{xy}\vartheta_t(x)}{\sum\limits_{z\in \mathcal{X}}\mathcal{C}_{zy}\vartheta_t(z)}$\;
    $t\leftarrow t+1$
}
${\theta}\leftarrow\vartheta_{t}$\;
\textbf{Return:} ${\theta}$\;
}
\caption{iterative Bayesian update (IBU)}\label{alg:penaltyAlg}
\end{algorithm}

\revision{
\begin{remark}
The initial distribution $\theta_{0}$ does not need to be a uniform distribution, any fully-supported distribution would suffice for the process to eventually converge to an optimal mechanism. However, starting with a uniform distribution allows us 
to avoid any bias in the mechanisms produced in the intermediate steps.  
\end{remark}
}
\revision{
\begin{remark}We believe that the last step (3) is not really necessary: The combination of all estimations should already be the MLE of the true distribution, and this is also what we have witnessed in the experiments. However, applying this last step allows us  to \emph{formally prove} the converge to the MLE, using the results for GIBU in \cite{EhabGIBU}.
\end{remark}
}

\revision{In the practical implementation of PRIVIC, we use the precision parameters $\delta_{\text{BA}}$, $\delta_{\text{IBU}}$, and $\delta_{\text{GIBU}}$ to set the threshold of empirical convergence of BA, IBU, and GIBU, respectively.} Let the privacy mechanism generated this way after $N$ iterations, for fixed parameters $c_0$, $\beta$, $\delta_{\text{BA}}$, $\delta_{\text{IBU}}$, and $\delta_{\text{GIBU}}$, be functionally represented as $\hat{\mathcal{C}}_{\text{BA}}\left (\theta_0,N\right )$.

Concerning statistical utility, it is important to ensure that IBU converges to the true distribution. As a matter of fact, IBU always converges to an MLE but the MLE may not be unique~\cite{EhabGIBU}. More precisely, there can be more than one distribution that is the most likely input to the obfuscation mechanism, for a given empirical distribution on the noisy data. Thus, even though IBU  converges, it may converge to a distribution different from the true one. This is a problem in the method by Oya et al. in \cite{Oya:19:EuroSnP} which 
computes the obfuscation mechanism via the algorithm of 
Shokri et al.~\cite{ShokriQuantifyingLocPriv2011}. The resulting mechanism optimizes the trade-off between distortion and a Bayesian notion of privacy, but may not have a unique MLE, as illustrated in the example below. 
They probably did not realize the problem, because they relied on the flawed results by \cite{AgarwalIBU} according to which every mechanism would have a unique MLE. 

\revision{The following example is a simplified version of the example given in \cite{EhabGIBU} (Sections 3.1 and 3.2.) which was aimed at showing the non-uniqueness of the MLE, and consequent convergence to the wrong distribution, in a more general setting. However, for the scope of our work, a simpler variant suffices.}

\begin{example}\label{examp:BadLPPM}
Consider three collinear locations, $a$, $b$ and $c$, where $b$ lies in between $a$ and $c$ at a unit distance from each of them. Assume that the prior distribution on these three locations is uniform and that the constraint on the utility is that it should not exceed $\nicefrac{2}{3}$. Then a mechanism that optimizes the QoS in the sense of \cite{ShokriQuantifyingLocPriv2011} is the one that maps all locations to $b$. However, this mechanism has no statistical utility, as the $b$'s  do not provide any information about the original distribution. Indeed, given $n$ obfuscated locations (i.e., $n$ $b$'s) all distributions on $a$, $b$ and $c$ of the form $\nicefrac{k_a}{n}, \nicefrac{k_b}{n}, \nicefrac{k_c}{n}$ with $k_a+k_b+k_c=n$, have the same likelihood to be the original one. 
\end{example}

Fortunately, our method does not have this problem, because the BA produces an invertible mechanism, and invertibility implies the uniqueness of the MLE \cite{EhabGIBU}. \revision{In particular, we are now able to show the convergence of PRIVIC as a  whole using the results of \cite{EhabGIBU}.}

\begin{restatable}{theorem}{BAisInvertible}\label{th:BA_is_invertible}
For any $t\geq 1$, the mechanism generated by BA over $\mathcal{X}$ at the $t$'th iteration, seen as a stochastic matrix,  is invertible. 
\end{restatable}

\begin{proof}
    In Appendix~\ref{app:proofs}.
\end{proof}


\color{black}
\begin{restatable}{theorem}{PRIVICConverges}\label{th:PRIVIC_converges}
PRIVIC converges to the unique MLE of the true distribution. 
\end{restatable}

\begin{proof}
    In Appendix~\ref{app:proofs}.
\end{proof}

\color{black}

To evaluate the statistical utility of $\hat{\mathcal{C}}_{\text{BA}}\left (\theta_0, N\right)$ (cf. Section~\ref{sec:experiments}), we will measure the EMD between the true and the estimated PMFs at the end of $N$ iterations of PRIVIC. Thus, the quantity $EMD(\hat{\pi}_{\mathcal{X}},\pi_{\mathcal{X}})$ parameterizes the utility of $\hat{\mathcal{C}}_{\text{BA}}\left (\theta_0,N\right)$ for the service providers. We use the same Euclidean distance as the underlying metric for computing, both, the EMD and the average distortion -- this consistency threads together and complements the notion of \emph{utility} of the service providers and that from the sense of the QoS of the users. 

\section{Experimental analysis of PRIVIC}~\label{sec:experiments}

In this section, we describe the empirical results obtained by carrying out experiments to illustrate and validate the working of our proposed method. Standard Python packages were used to run the experiments in a MacOS Ventura 13.2.1 environment with an Intel core i9 processor and 32 GB of RAM. Like in the previous experiments to compare the statistical utilities of BA and LAP, as elaborated in Section~\ref{sec:BAvsLapStatUtil}, we use real locations from the same regions in Paris and San Francisco from the Gowalla dataset~\cite{Gowalla:online, cho2011friendship}. In particular, we consider Gowalla check-ins from (i) a northern part of San Francisco bounded by latitudes (37.7228, 37.7946) and longitudes (-122.5153, -122.3789) covering an area of 12Km$\times$8Km discretized with a 24$\times$17 grid; (ii) a central part of Paris bounded by latitudes (48.8286, 48.8798) and longitudes (2.2855, 2.3909) covering an area of 8Km$\times$6Km discretized with a 16$\times$12 grid. In this setting, we work with 123,108 check-in locations in San Francisco and 10,260 check-in locations in Paris. Figure~\ref{fig:checkins} shows the particular points of check-in from Paris and San Francisco and Figure~\ref{fig:density} highlights their distribution. 


\begin{figure}[htbp]
\centering
   \begin{subfigure}[b]{1\columnwidth}
   \includegraphics[width=1\columnwidth]{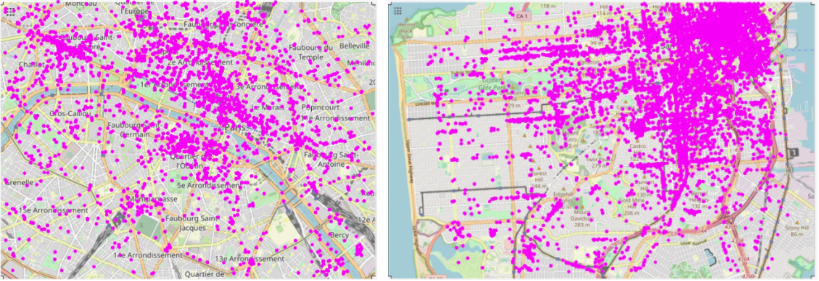}
   \caption{Check-in locations in Paris and San Francisco}
   \label{fig:checkins} 
\end{subfigure}
\begin{subfigure}[b]{1\columnwidth}
   \includegraphics[width=1\columnwidth]{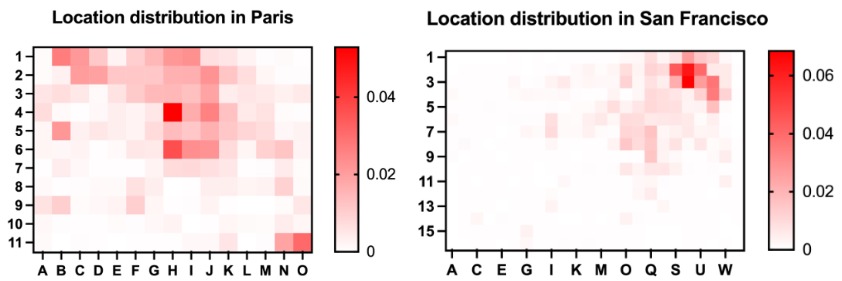}
   \caption{Density of the original locations from Paris and San Francisco}
   \label{fig:density}
\end{subfigure}
\caption{(a) visualizes the original locations from Gowalla dataset from Paris and San Francisco. (b) illustrates a heatmap representation of the locations in the two cities to capture the distribution of the data.}
\label{fig:OrLoc:SF_Paris}
\end{figure}

\begin{table}[htbp]
\centering
\caption{Run-time and complexity of BA and IBU in each cycle of PRIVIC}~\label{table:runtime}
\resizebox{1\columnwidth}{!}{%
  \begin{tabular}{|c|cc|cc|}
    \hline
    \multirow{2}{*}{Dataset} & 
    \multicolumn{2}{c|}{BA} & 
    \multicolumn{2}{c|}{IBU} \\
    & Mean run-time (sec.) & Complexity & Mean run-time (sec.) & Complexity \\
    \hline
    Paris & 3.256 & $\mathcal{O}(n^2)$ & 1.30 & $\mathcal{O}(n^2)$ \\
    \hline
    San Francisco & 16.805 & $\mathcal{O}(n^2)$ & 128.192 & $\mathcal{O}(n^2)$ \\
    \hline
    \multicolumn{5}{|c|}{\emph{Framework: MacOS Ventura 13.2.1 with Intel core i9 processor and 32 GB RAM}}\\
    \hline
  \end{tabular}}
\end{table}

We implemented PRIVIC on the locations from Paris and San Francisco separately to judge its performance on real data with very different priors. In both cases, we ran our mechanism until it empirically converged. 15 cycles of PRIVIC were required for the Paris dataset where each cycle comprised 8 iterations of BA until it converged to generate the privacy mechanism and 10 iterations of IBU until it converged to the MLE of the prior. For the San Francisco dataset, PRIVIC needed 8 cycles to converge with 5 iterations of BA and IBU each to converge in every cycle. The complexities and the run-times of BA and IBU are summarised in Table~\ref{table:runtime}. In both cases, we assigned the value of the loss parameter signifying the QoS of the users, $\beta$, to be $0.5$ and $1$. This was done to test the performance of PRIVIC in estimating the true PMF under two different levels of privacy. Each experiment was run for 5 rounds of simulation to calibrate the randomness of the sampling and obfuscation. In each cycle of PRIVIC, across all the settings, BA was initiated with the uniform marginal $c_0$ and a uniform distribution over the space of locations as the ``starting guess'' of the true distribution.

With $\beta=1$, BA produces a geo-indistinguishable mechanism that injects less local noise than that obtained with $\beta=0.5$. As a result, PRIVIC obtains a more accurate estimate of the true PMF for the $\beta=1$ than for $\beta=0.5$. However, in both cases, the EMD between the true and the estimated distributions is very low, indicating that the PRIVIC mechanism is able to preserve a good level of statistical utility. Moreover, for both Paris and San Francisco, PRIVIC seems to significantly improve its estimation of the true PMF with every iteration until it converges to the MLE. Comparing Figures \ref{fig:Estimated_Density} and \ref{fig:density}, we see that the estimations of the true distributions of the locations in Paris and San Francisco by IBU under PRIVIC for both the settings of the loss parameter are fairly accurate. However, as we would anticipate, the statistical utility for $\beta=1$ is better than that for $\beta=0.5$.   

\begin{figure}[htbp]
\centering
\begin{subfigure}[b]{0.493\columnwidth}
   \includegraphics[width=1.02\columnwidth]{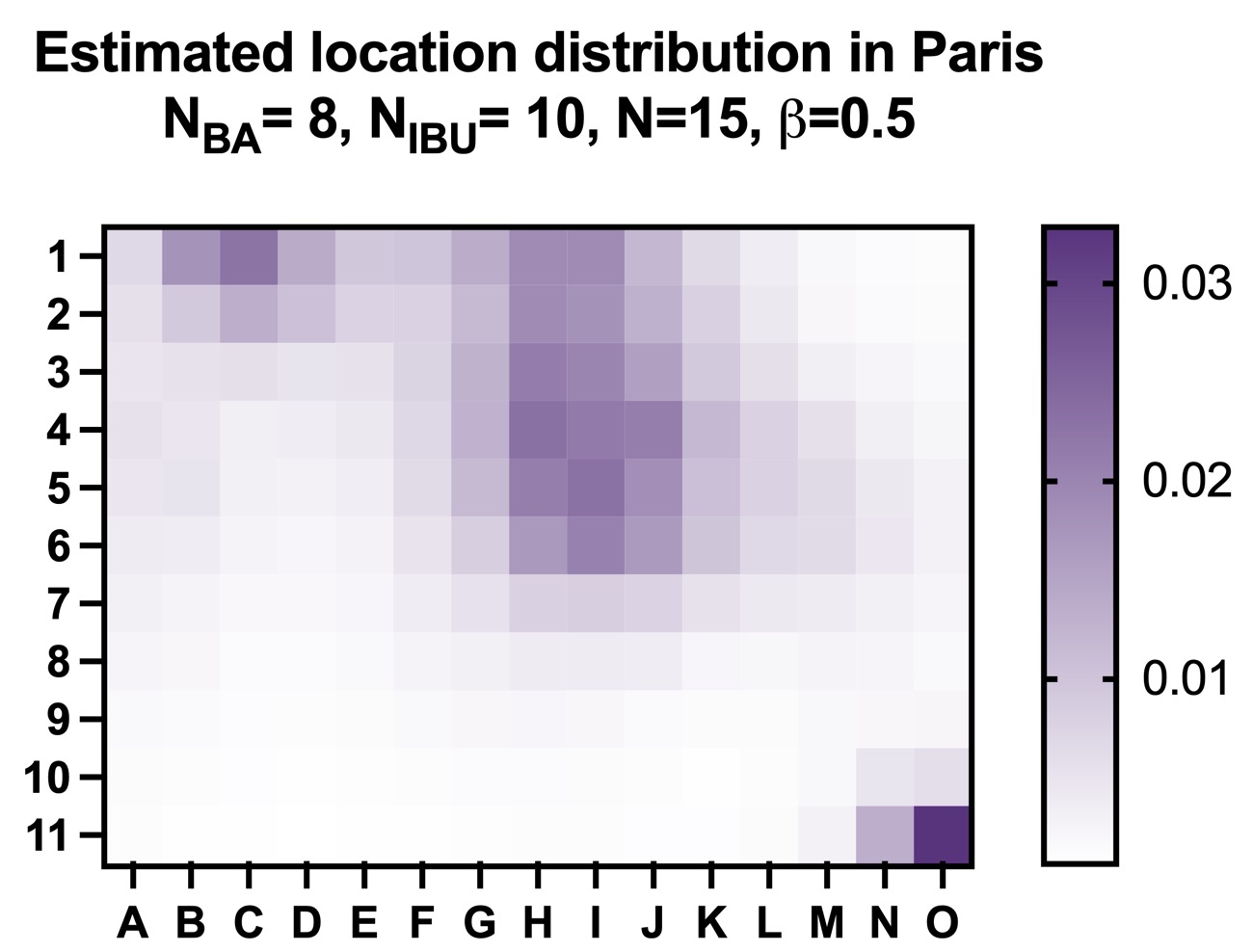}
   \caption{Paris; $\beta=0.5$}
   \label{fig:ParisEstimate_Beta:0_5} 
\end{subfigure}
\begin{subfigure}[b]{0.493\columnwidth}
   \includegraphics[width=1.02\columnwidth]{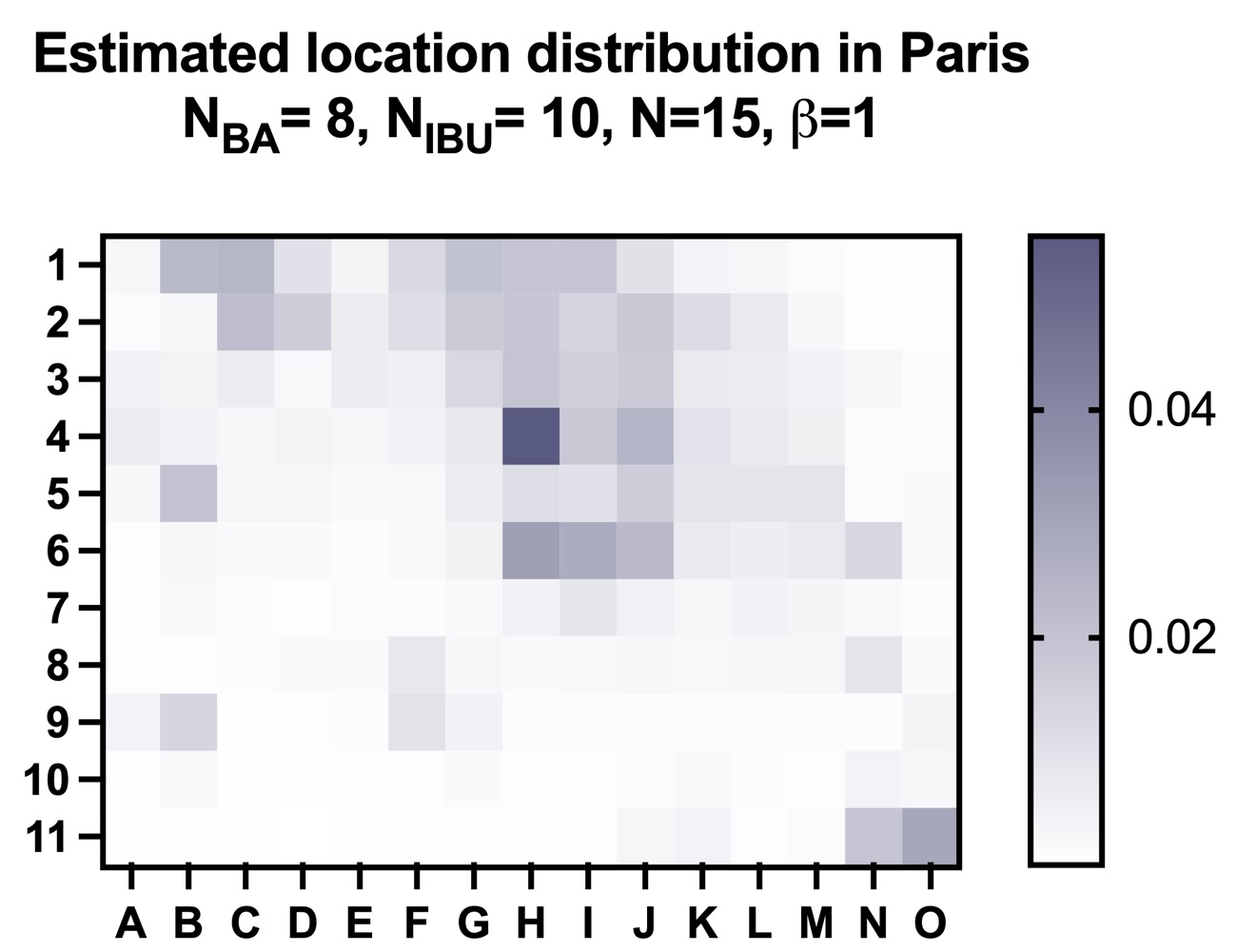}
   \caption{Paris; $\beta=1$}
   \label{fig:ParisEstimate_Beta:1}
\end{subfigure}
\begin{subfigure}[b]{0.493\columnwidth}
   \includegraphics[width=1.02\columnwidth]{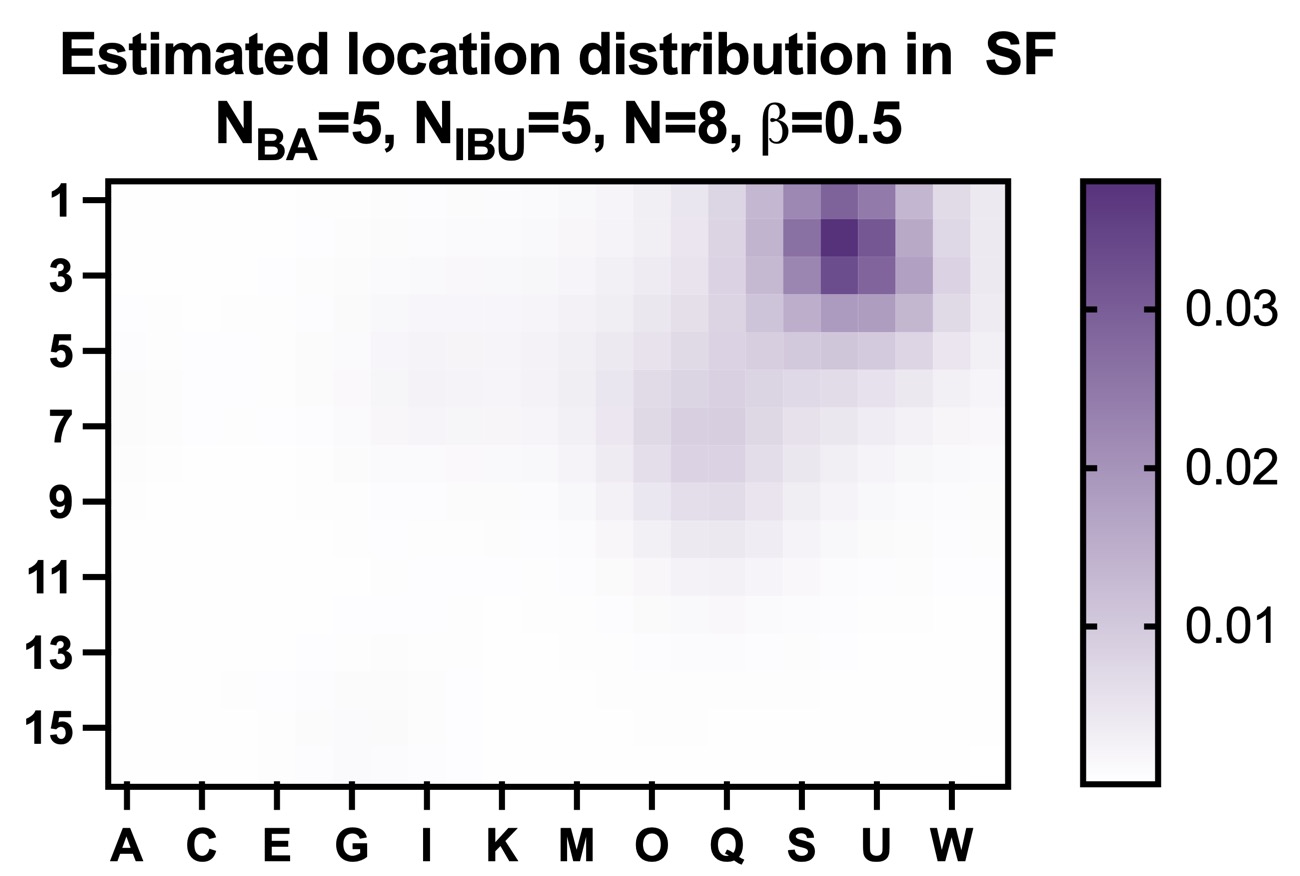}
   \caption{San Francisco; $\beta=0.5$}
   \label{fig:SFEstimate_Beta:0_5} 
\end{subfigure}
\begin{subfigure}[b]{0.493\columnwidth}
   \includegraphics[width=1.02\columnwidth]{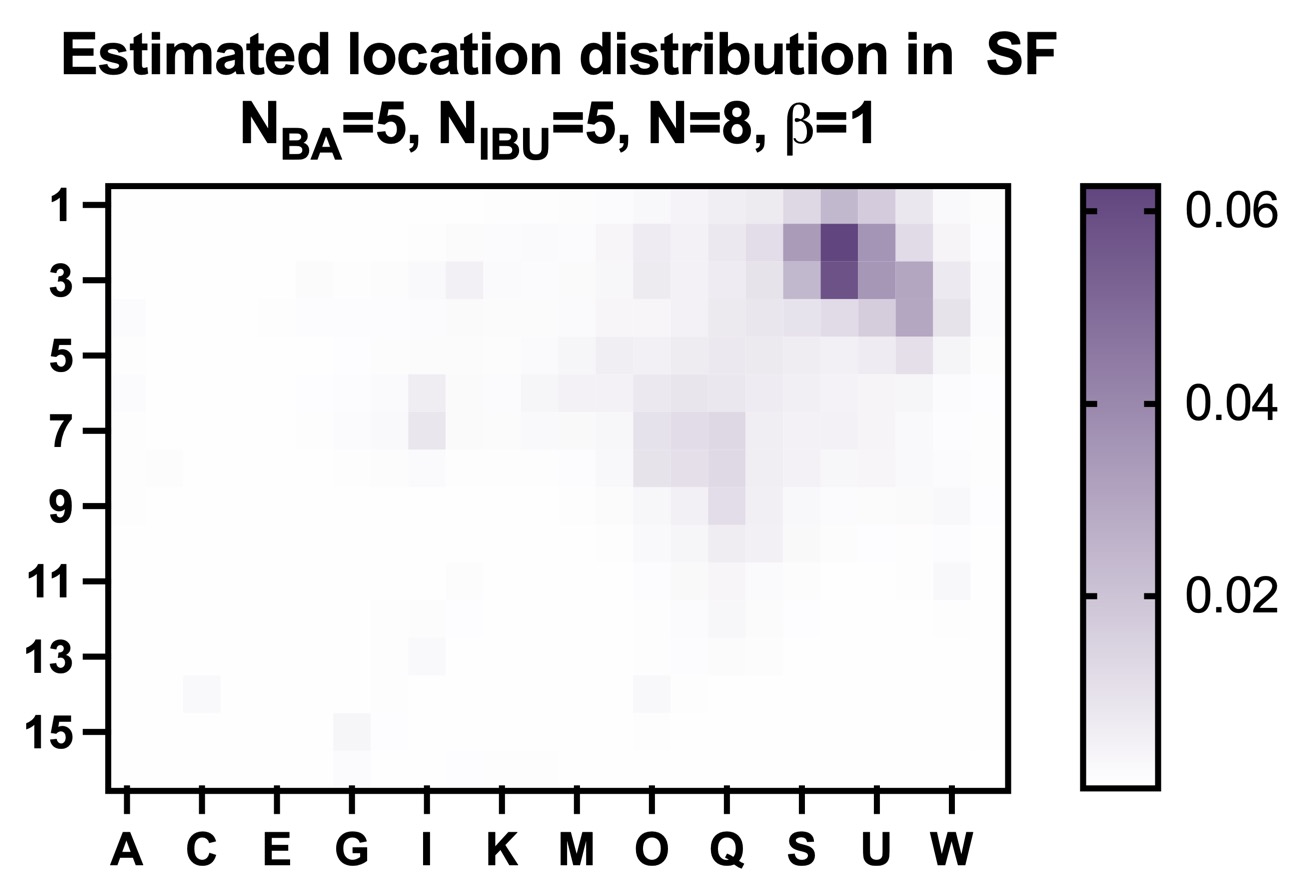}
   \caption{San Francisco; $\beta=1$}
   \label{fig:SFEstimate_Beta:1}
\end{subfigure}
\caption{Visualization of the estimated true distribution of the locations in Paris ((a) and (b)) and San Francisco ((c) and (d)) by PRIVIC after its convergence; the first column is for $\beta=0.5$ and the second column is for $\beta=1$.}
\label{fig:Estimated_Density}
\end{figure}

\begin{figure}[htbp]
\centering
\begin{subfigure}[b]{0.45\columnwidth}
   \includegraphics[width=1.15\columnwidth]{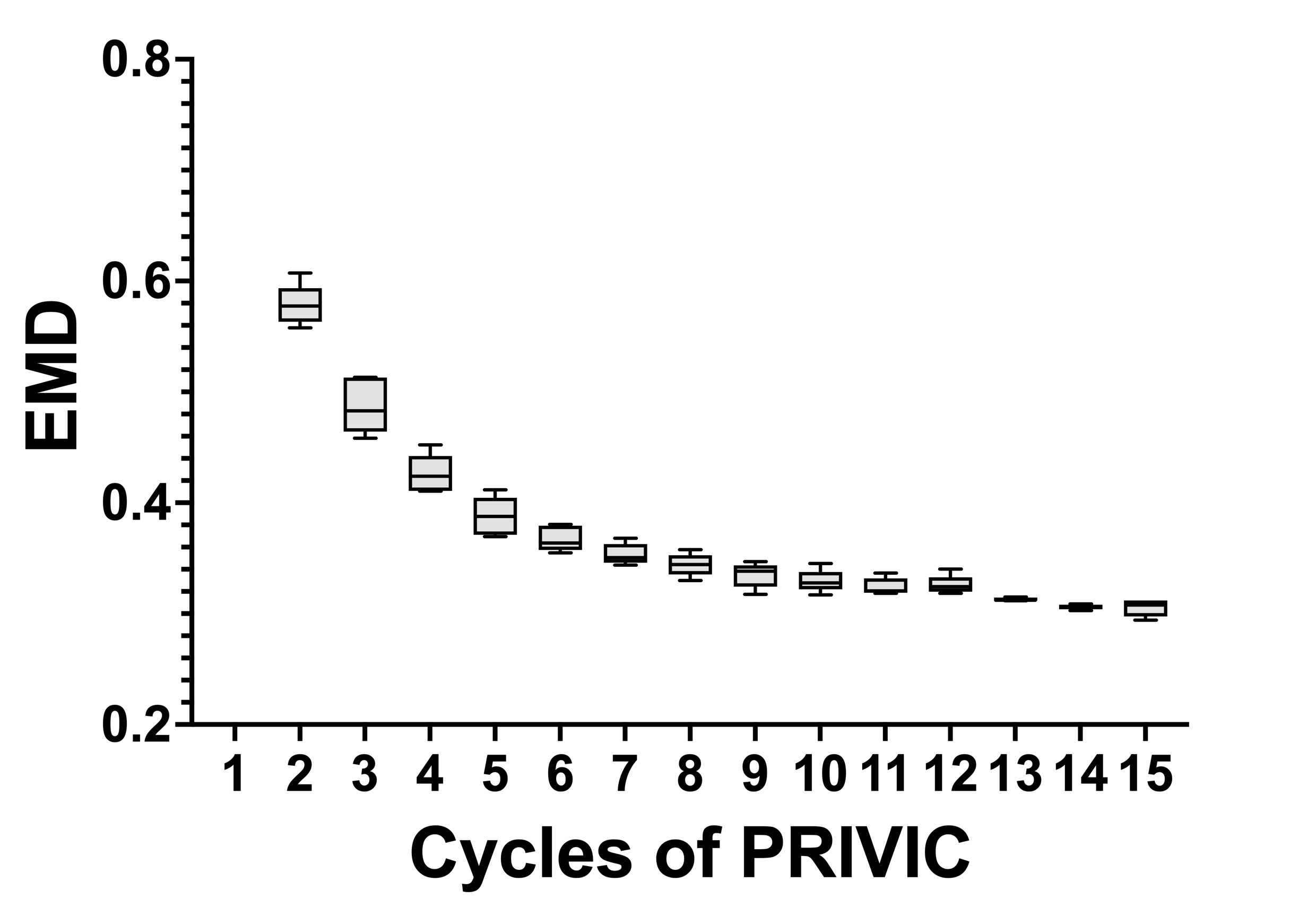}
   \caption{$\beta=0.5$}
   \label{fig:UtilityParisZoomedB1} 
\end{subfigure}
\begin{subfigure}[b]{0.45\columnwidth}
   \includegraphics[width=1.15\columnwidth]{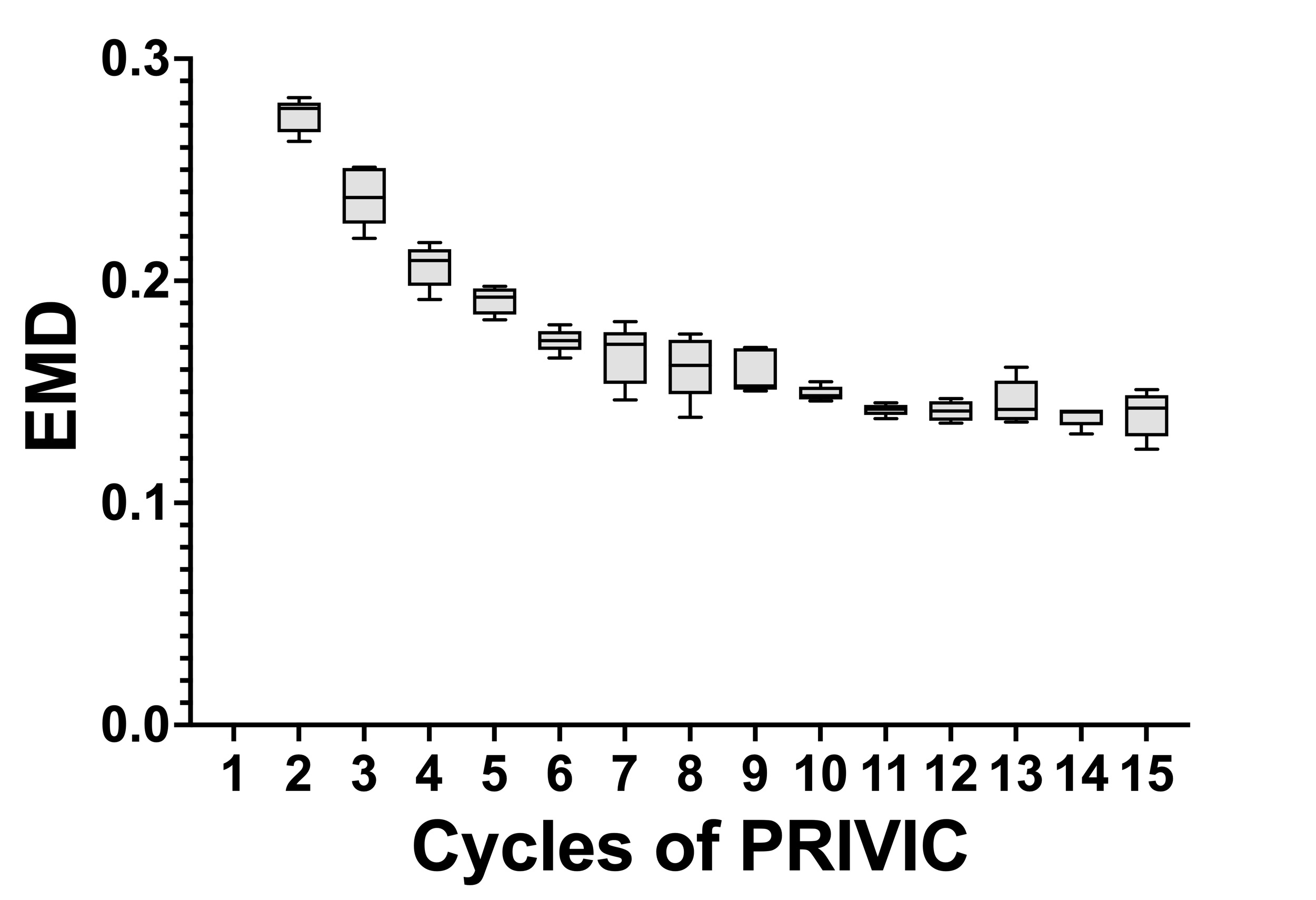}
   \caption{$\beta=1$}
   \label{fig:UtilityParisZoomedB0_5}
\end{subfigure}
\caption{(a) and (b) show the EMD between the true PMF of the Paris locations and its estimation by PRIVIC in each of its cycle for $\beta=0.5$ and $\beta=1$, respectively.}
\label{fig:UtilityParis}
\end{figure}

\begin{figure}[htbp]
\centering
\begin{subfigure}[b]{0.45\columnwidth}
   \includegraphics[width=1.1\columnwidth]{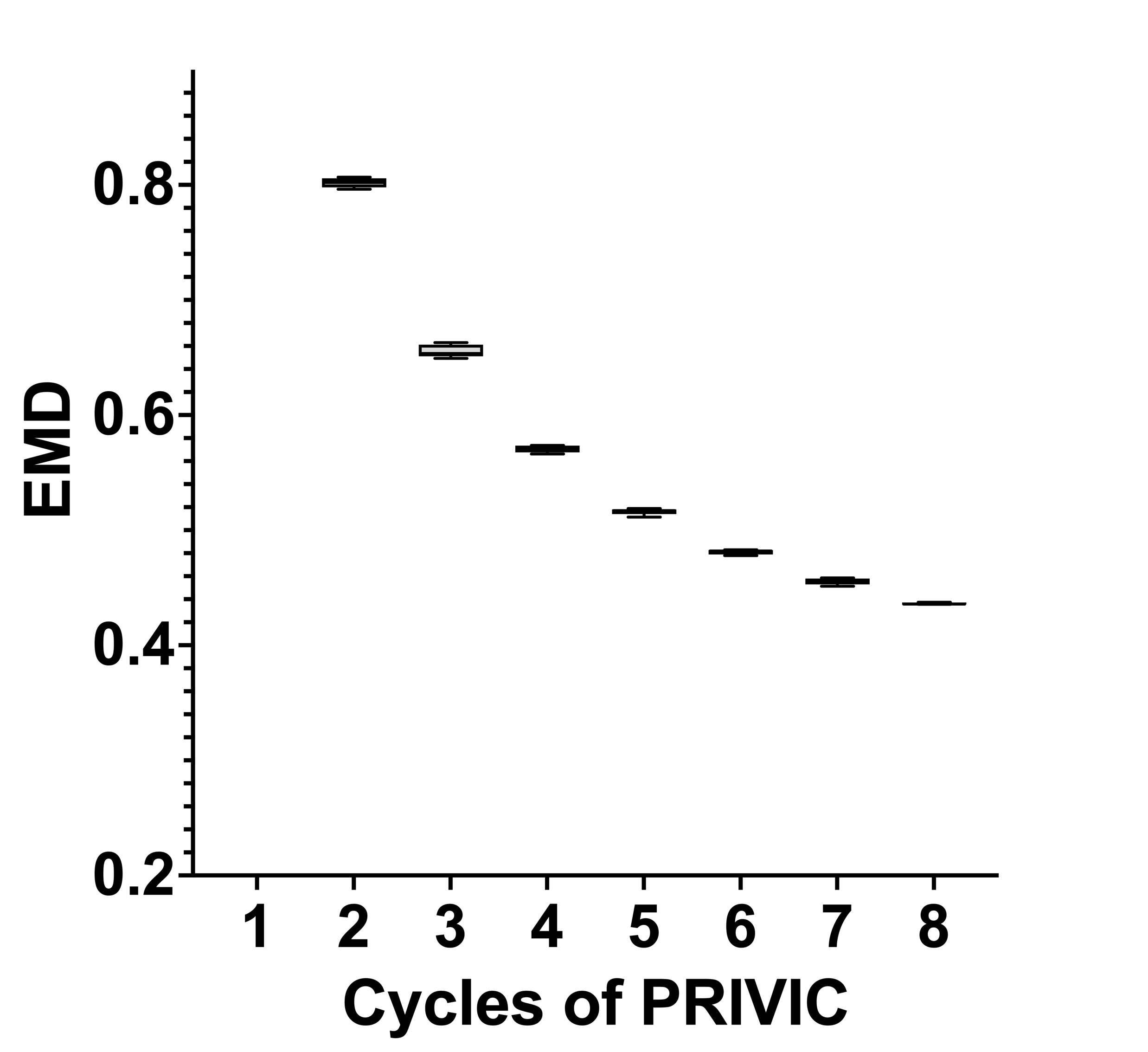}
   \caption{$\beta=0.5$}
   \label{fig:UtilitySFZoomedB0_5} 
\end{subfigure}
\begin{subfigure}[b]{0.45\columnwidth}
   \includegraphics[width=1.1\columnwidth]{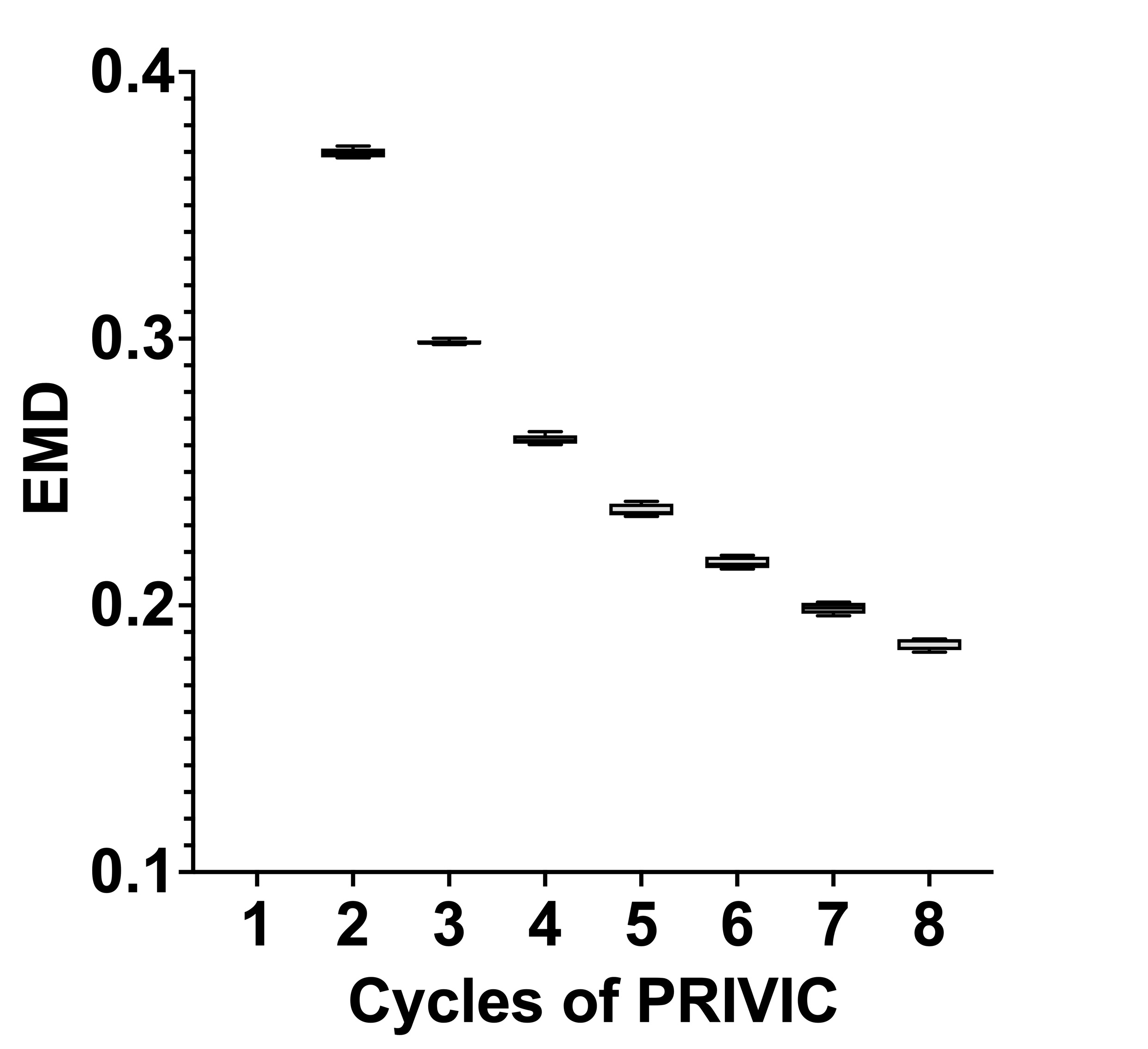}
   \caption{$\beta=1$}
   \label{fig:UtilitySFZoomedB1}
\end{subfigure}
\caption{(a) and (b) show the EMD between the true PMF of the San Francisco locations and its estimation by PRIVIC in each of its cycles for $\beta=0.5$ and $\beta=1$, respectively.}
\label{fig:UtilitySF}
\end{figure}

Now we shift our attention to analyze the performance of PRIVIC in preserving the statistical utility and its long-term behaviour of the two datasets. Figure~\ref{fig:UtilityParis} shows us the EMD between the true distribution of the locations in Paris and its estimate by IBU under PRIVIC in each of its 15 cycles under the two settings of privacy ($\beta=0.5,\,1$). One of the most crucial observations here is that the EMD between the true and the estimated PMFs seems to decrease with the number of iterations and it finally converges, implying that the estimation of PMFs given by PRIVIC seems to improve at the end of each cycle and, eventually, it converges to the MLE of the prior of the noisy locations, giving the estimate of the true PMF. This, empirically, suggests the convergence of the entire method. This is a major difference from the work of \cite{Oya:19:EuroSnP} which, as we pointed out before, has the potential of encountering an LPPM which is optimal according to the standards set by Shokri et al. in \cite{ShokriQuantifyingLocPriv2011} but the EM method used to estimate the true distribution would fail to converge for that mechanism as illustrated in Example~\ref{examp:BadLPPM}. We observe a very similar trend for the San Francisco dataset. Figure~\ref{fig:UtilitySF} shows the statistical utility of the mechanism generated by PRIVIC under each of its 8 cycles for $\beta=0.5$ and $\beta=1$. The explicit values of the EMD between the true and the estimated PMFs on the location data from Paris and San Francisco for both the settings of the loss parameter can be found in Tables \ref{table:UtilityData_Paris} and \ref{table:UtilityData_SF} in Appendix~\ref{app:tables}.

In the next part of the experiments, we set ourselves to dissect the trend of the statistical utility harboured by PRIVIC w.r.t. the level of geo-ind it guarantees. We recall that the higher the value of $\beta$, the lesser the local noise that is injected into the data, and, hence, the worse will be the statistical utility, staying consistent with our observations in Figure~\ref{fig:Estimated_Density}. We continue working with the location data from Paris and San Francisco obtained from the Gowalla dataset in the same framework as described before. We consider $\beta$ taking the values $0.1,0.3,0.5,0.7,0.9,1$, and for each value of the loss parameter, we run PRIVIC on both datasets using the same number of iterations as in the previous experiments. We adhere to 5 rounds of simulation for each $\beta$ to account for the randomness generated in the obfuscation process.

\begin{figure}[htbp]
\centering
\begin{subfigure}[b]{0.49\columnwidth}
   \includegraphics[width=0.9\columnwidth]{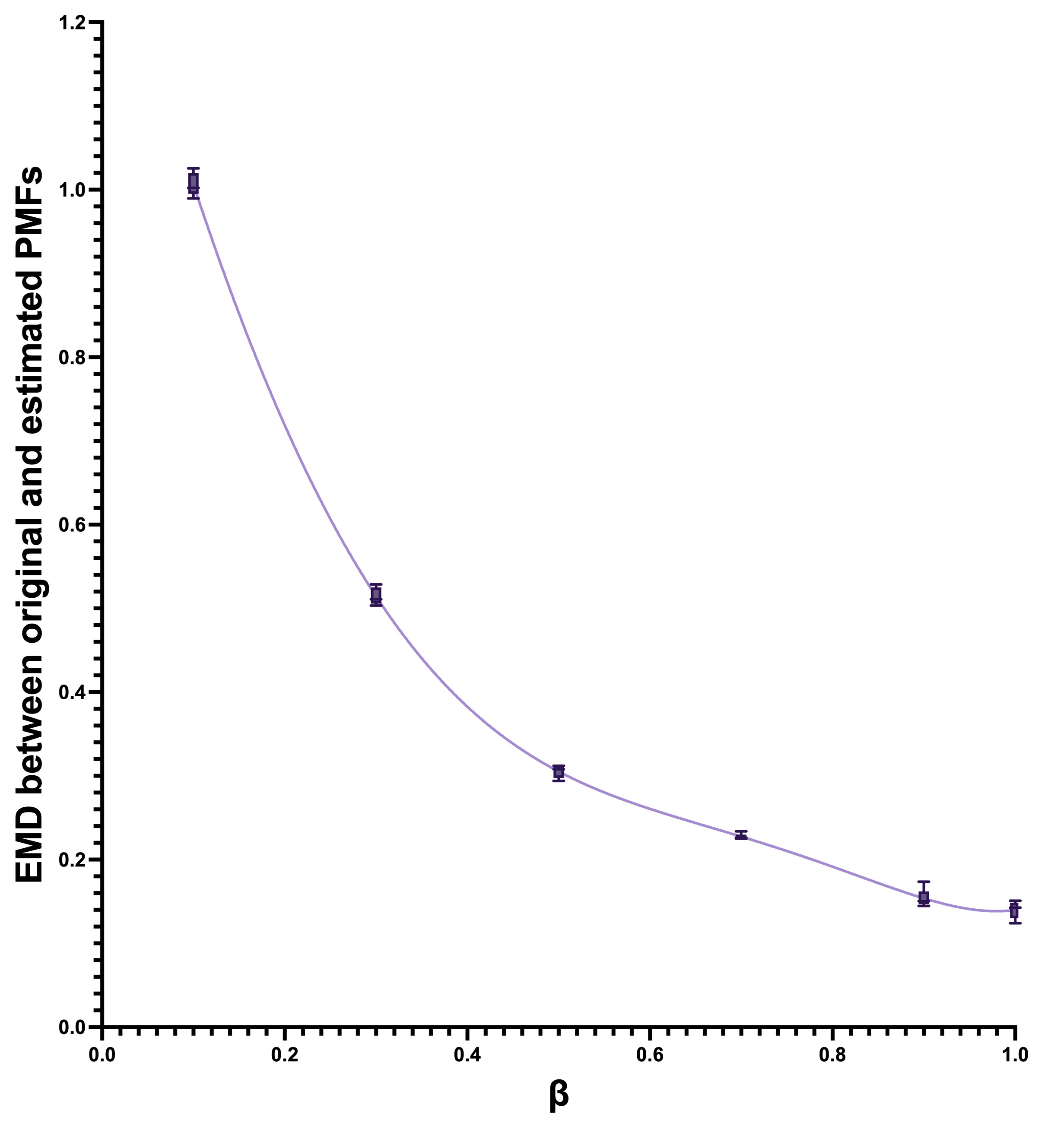}
   \caption{Paris}
   \label{fig:UtilityVsDistortion_Paris} 
\end{subfigure}
\begin{subfigure}[b]{0.49\columnwidth}
   \includegraphics[width=0.9\columnwidth]{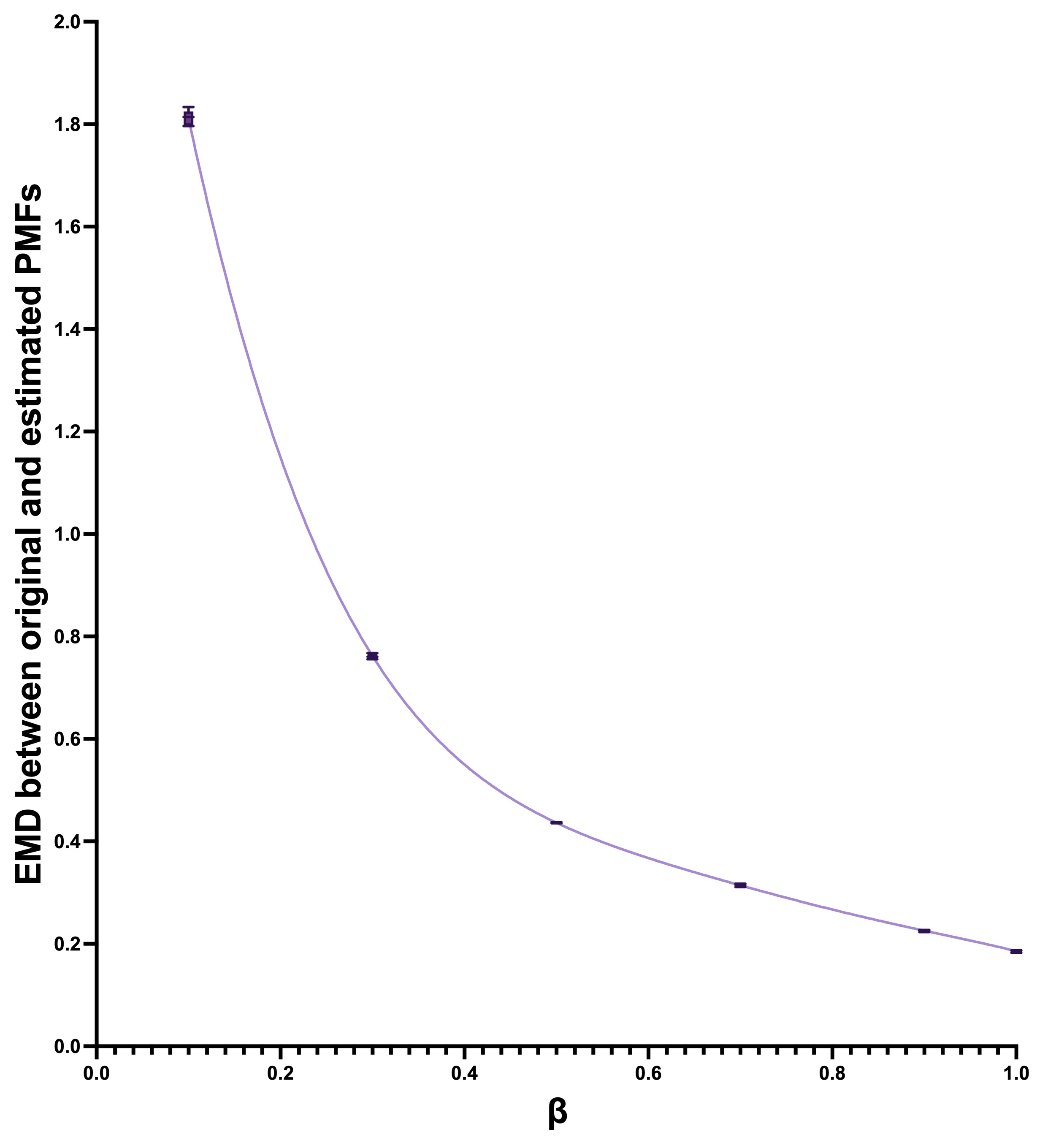}
   \caption{San Francisco}
   \label{fig:UtilityVsDistortion_SF}
\end{subfigure}
\caption{(a) and (b) illustrate that EMD between the true and the estimated distributions of the locations in Paris and San Francisco, respectively, after the empirical convergence of PRIVIC for the different values of the loss parameters $\beta$.}
\label{fig:UtilityVsDistortion}
\end{figure}

Figure~\ref{fig:UtilityVsDistortion} shows us that the difference between the true and the estimated PMFs under PRIVIC starts by sharply decreasing and then eventually stabilizes with an increase in the value of the loss parameter. In other words, as the intensity of the local noise decreases, we will end up estimating the unique MLE of the original distribution while optimizing MI and the users' QoS. Both the location datasets result in a Pareto curve showing a similar trend. This depicts an improvement of the estimated PMF until it converges to the true distribution. This observation complements the Pareto-optimality of MI with the maximum average distortion as studied in \emph{rate-distortion theory}~\cite{ShannonInfoTheory}, and thus, we empirically weave together the two ends of utility with the information theoretical notion of privacy under PRIVIC. 

\subsubsection*{Discussion}

As a justification for the applicability and the working of our method, in a setting where the service providers periodically collect location data from clients, it is reasonable to assume that, over time, they would like to maximize their utility by accurately approximating the true distribution of the population for improving their service in various aspects (crowd management, security enhancement, WLAN hotspot positioning, etc.). BA, in addition to guaranteeing geo-ind, acts as an elastic location-privacy mechanism and optimizes between MI and the data owners' QoS when it initiates with the true prior. Therefore, as every iteration of PRIVIC improves the estimation of the original distribution, as seen in Figures \ref{fig:StatUtilBAvsLapParis} and \ref{fig:StatUtilBAvsLapSF}, which is used as the starting distribution in its next cycle, the overall privacy protection and its trade-off with QoS of the users will also improve, motivating the users and the service providers comply with PRIVIC to act in their best interests and, in turn, engaging them in a positive feedback loop to maximize the corresponding privacy and utility goals. 

\revision{
\section{Vulnerability of PRIVIC}~\label{sec:vul_PRIVIC}
\vspace{-1cm}
\begin{figure}[htbp]
\centering
   \begin{subfigure}[b]{1\columnwidth}
   \includegraphics[width=0.9\columnwidth]{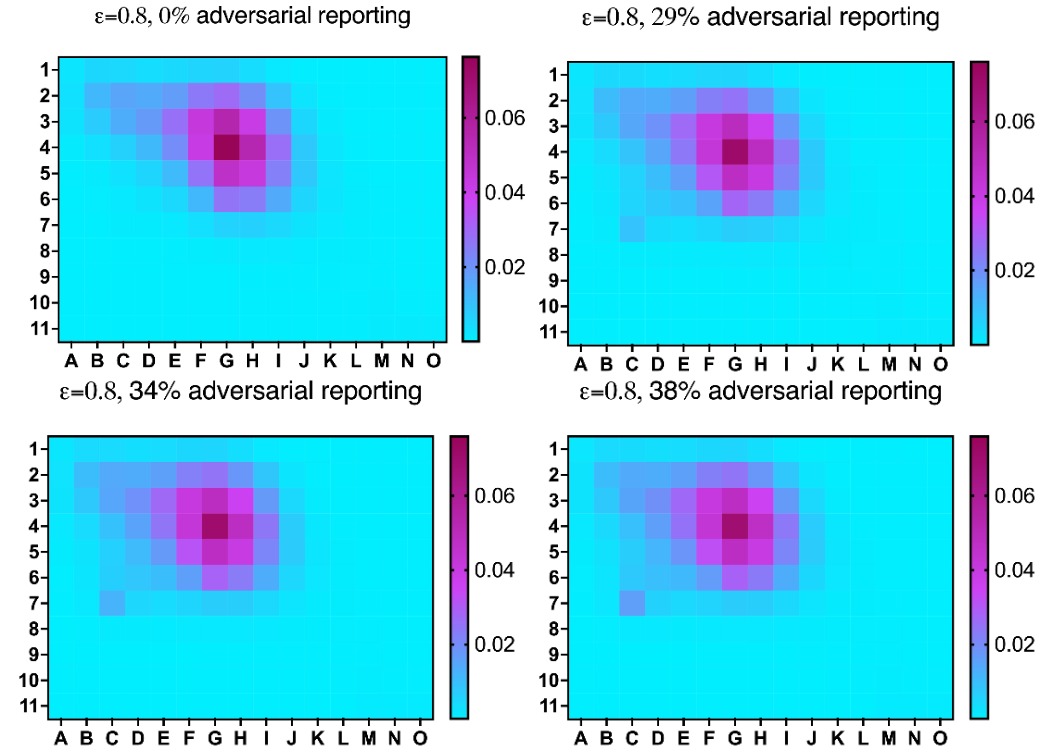}
   \caption{\revision{Obfuscation distribution of a vulnerable location in the map using BA satisfying $0.8$-geo-ind with (clockwise) $0\%$, $29\%$, $34\%$, and $39\%$ adversarial users, respectively.}}
   \label{fig:adv_input_B04} 
\end{subfigure}
\begin{subfigure}[b]{1\columnwidth}
   \includegraphics[width=0.9\columnwidth]{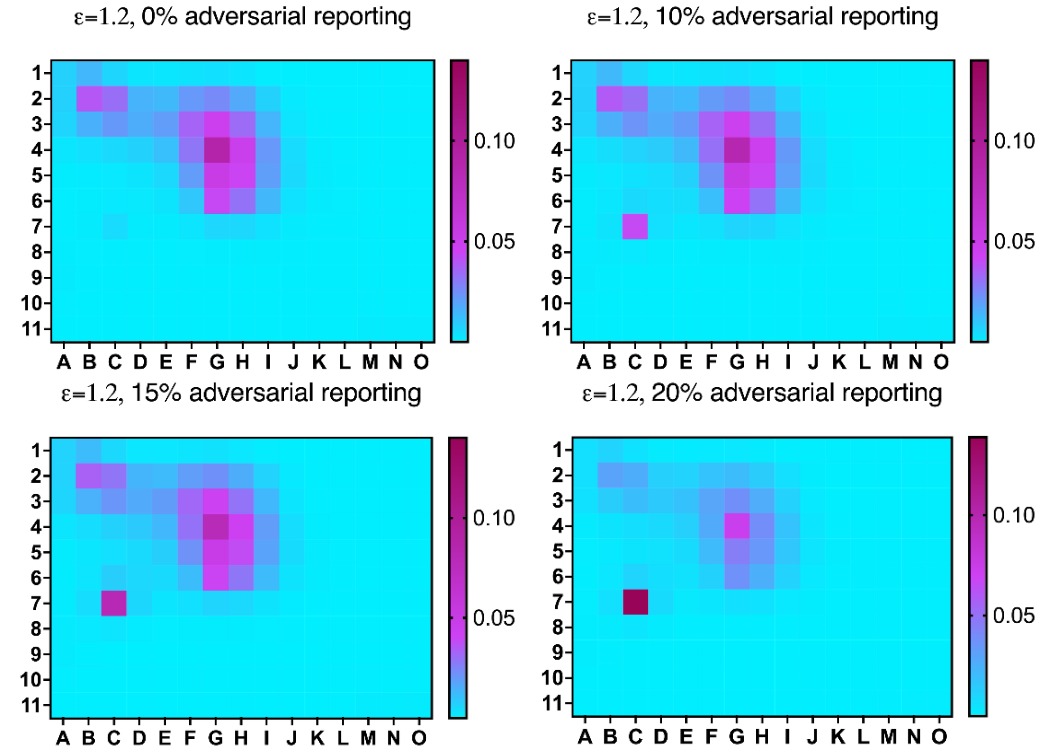}
   \caption{\revision{Obfuscation distribution of the vulnerable location in the map using BA satisfying $1.2$-geo-ind with (clockwise) $0\%$, $10\%$, $15\%$, and $20\%$ adversarial users, respectively.}}
   \label{fig:adv_input_B06}
\end{subfigure}
\caption{\revision{Effect on the privacy provided by BA to obfuscate the geo-spatially isolated location (Location A as in Figure~\ref{fig:str_vul}) for different fractions of adversarial users who intentionally report their locations falsely under two different levels of formal geo-ind guarantees by BA.}}
\label{fig:adv_input}
\end{figure}}

\color{black}
In this section, we illustrate a potential vulnerability of PRIVIC when a subset of colluded users (\emph{adversarial users}) intentionally deviate from the correct use of the protocol. 

The attack consists in falsely  reporting their location in order to alter the estimation of the true distribution and, consequently, the obfuscation mechanism produced by BA. Specifically, we study two cases: (i) adversaries reporting a crowded location (\emph{strong location}) and  adversaries reporting an isolated location (\emph{vulnerable location}). 

We used the real locations from the Paris dataset with the geo-spatially isolated ``island'' (as illustrated by~Figure~\ref{fig:str_vul} presented in Section~\ref{sec:LPPM_with_BA}) representing a strong and a vulnerable location in the map denoted by points A and B in the figure, respectively. We performed the experiments with two different levels of formal geo-indistinguishability ($\epsilon=0.8$ and $\epsilon=1.2$) considering different fractions of ``adversarial data submissions'' (i.e., adversarial users reporting their locations falsely to compromise the privacy of other users) in each case. 

Although, in both cases (i) and (ii), we observed that with an increase in the fraction of adversaries, the probability mass assigned by BA (used to obfuscate the corresponding points locally) becomes higher in and around the corresponding reported points, the impact of privacy differs across both settings. For (i), the obfuscation distribution happens to be weighed heavily around the true crowded location (point B) by both BA and LAP (as illustrated by Figures \ref{fig:Elastic_str} and \ref{fig:Elastic_vul}) even without any adversarial users. This trend was seen to continue even when we assumed different levels of adversaries.

However, case (ii) represents a much more serious attack. As the number of adversarial users increases, BA and, in turn, PRIVIC become less potent to be able to protect the privacy of  honest users who are genuinely located in an isolated location on the map. We also observe that BA and PRIVIC start behaving more like LAP. In particular, Figures \ref{fig:adv_input_B04} and \ref{fig:adv_input_B06} illustrate that the obfuscation distribution generated by BA satisfying geo-ind with $\epsilon=0.8$ and $\epsilon=1.2$, respectively, of the (non-adversarial) users located in point A assigns more and more weight to and around point A which, as a result, makes them more and more identifiable. This evaluation of the vulnerability of PRIVIC under adversarial data submission essentially exposes a weakness of the elastic distinguishability metric. We plan to address this aspect and aim to make PRIVIC more robust against adversarial users in our future works.

\color{black}
\section{Conclusion}~\label{sec:conclusion}

We have bridged some  ideas from information theory and statistics to develop a method allowing an incremental collection of location data while protecting the privacy of the data owners, upholding their quality of service, and preserving the statistical utility for the data consumers. Specifically, we have proposed the Blahut-Arimoto algorithm as a location-privacy mechanism, showing its extensive privacy-preserving properties and its other advantages over the state-of-the-art Laplace mechanism for geo-ind. Further, we have exhibited its duality with the iterative Bayesian update and explored this connection to present an iterative method (PRIVIC) for incremental collection of location data with formal guarantees of geo-ind and an elastic distinguishability metric, while optimizing the QoS of the users and their privacy from an information theoretical perspective. Moreover, PRIVIC efficiently estimates the MLE of the distribution of the original data and, thus, yields a high statistical utility for the service providers.  Finally, we have illustrated the convergence and the general functioning of PRIVIC with experiments on real location datasets. We believe that our results can be extended easily to other kinds of data, including those with high dimensions, and to other notions of distortion measures since the analysis carried out in this paper does not depend on the notion of distance used.


\bibliographystyle{IEEEtran}
\bibliography{references}
\appendices
\section{Proofs}~\label{app:proofs}
 
\BAElastic*
\begin{proof}
Let $x,y\in\mathcal{X}$ be any true and reported location, respectively. Letting $\hat{\mathcal{C}}_{\text{BA}}$ to be the limiting mechanism generated by BA, to show that $\hat{\mathcal{C}}_{\text{BA}}$ possesses an elastic distinguishability metric, we need to ensure that:
\begin{enumerate}
    \item The probability of reporting $y$ to obfuscate $x$ given by $\hat{\mathcal{C}}_{\text{BA}}$ should be exponentially reducing w.r.t. the Euclidean distance between $x$ and $y$, staying consistent with the essence of geo-ind (the property captured by \eqref{eq:elasticprop1}).
    \item Under $\hat{\mathcal{C}}_{\text{BA}}$, the probability of reporting $y$ to obfuscate $x$ should be taking into account the mass of reported points around $y$, i.e., the more geo-spatially isolated (from other reported points) $y$ is in the space, the less likely it should be to report it, as, ideally, we would like to have $x$ being reported as a location amidst a crowd of other reported locations  (the property captured by \eqref{eq:elasticprop2}).
\end{enumerate}

Let's simplify the notation and denote $\mathbb{P}[\hat{\mathcal{C}}_{\text{BA}}(x)=y]$ as $\mathbb{P}_{{\text{BA}}}[y|x]$ and let $q(y)$ be the probability mass of the observed location $y$. Hence, for being an elastic location-privacy mechanism, $\hat{\mathcal{C}}_{\text{BA}}$ should satisfy \eqref{eq:elasticprop1} and \eqref{eq:elasticprop2}, i.e., we must have:
\begin{align}
    \mathbb{P}_{\text{BA}}[y|x]\propto \exp{-\beta d_{\text{E}}\left(x,y\right)}~\label{eq:BAelasticprop1}\\
    \mathbb{P}_{\text{BA}}[y|x]\propto q(y)~\label{eq:BAelasticprop2}
\end{align}
Therefore, in order to satisfy \eqref{eq:BAelasticprop1} and \eqref{eq:BAelasticprop2}, it is sufficient to have:
\begin{align}
    \mathbb{P}_{\text{BA}}[y|x]\propto \exp{-\beta d_{\text{E}}\left(x,y\right)+\ln{q(y)}}\nonumber\\
    \implies \mathbb{P}_{\text{BA}}[y|x]\propto q(y)\exp{-\beta d_{\text{E}}\left(x,y\right)}\nonumber\nonumber\\
    =\frac{q(y)\exp{\beta d_{\text{E}}(x,y)}}{\sum\limits_{z\in\mathcal{X}}q(z)\exp{-\beta d_{\text{E}}(x,z)}}~\label{eq:elasticBA}
\end{align}
Now, it's sufficient to note that, if we interpret the mass of $y$  as the probability of being reported by the mechanism, \eqref{eq:elasticBA} is exactly the fixpoint of $\mathcal{G}\circ\mathcal{F}$, cf. Remarks \ref{rem:BATransfom} and \ref{rem:BAConverge}. 
\end{proof}

\BAisInvertible*
\begin{proof}
For notational convenience, in this proof, we shall denote the Euclidean distance $d_{\text{E}}(.)$ as $d(.)$. For any $t\geq 1$, let $\mathcal{C}^{(t)}$ be the channel generated at the $t$'th iteration of BA. Hence, we have:
\begin{align}
    \mathcal{C}^{(t)}_{x,y}=\frac{c_{t-1}(y)\exp{-\beta d(x,y)}}{\sum_{z\in\mathcal{Y}}c_t(z) \exp{-\beta d(x,z)}}
\end{align}
Let $\mathcal{C}^{'}\in \vb*{C}(\mathcal{X},\mathcal{X})$ such that $\mathcal{C}^{'}_{x,y}=\exp{-\beta d(x,y)}$. Correspondingly, let us define $\mathcal{C}^{''(t)}, \mathcal{C}^{'''(t)}\in \vb*{C}(\mathcal{X},\mathcal{X})$ s.t. $\mathcal{C}^{''(t)}_{x,y}=c_{t-1}(y)\mathcal{C}^{'}_{x,y}$ and $\mathcal{C}^{'''(t)}_{x,y}=K_x\mathcal{C}^{''(t)}_{x,y}$ where $K_x=(\sum_{z\in\mathcal{Y}}c_t(z) \exp{-\beta d(x,z)})^{-1}$. Therefore, we have $\mathcal{C}^{'''(t)}=\mathcal{C}^{(t)}$. 

Exploiting the fact that scaling of rows and columns of matrices by real numbers (elementary operations on rows and columns) preserves their linear independence, we ensure that if $\mathcal{C}^{'}$ is invertible, then so is $\mathcal{C}^{''(t)}$ (elementary column operation on $\mathcal{C}^{'}$) which, in turn, implies that $\mathcal{C}^{'''(t)}=\mathcal{C}^{(t)}$ is invertible (elementary row operation on $\mathcal{C}^{''(t)}$). Therefore, in order to show $\mathcal{C}^{(t)}$ is invertible, it is sufficient to prove that $\mathcal{C}^{'}$ is invertible. 

Note that $\exp{-\beta d(x,y)^2}=\exp{-\beta ||x-y||^2_2}$ is the \emph{Gaussian kernel} for any $\beta>0$ and is positive definite~\cite{Haussler1999ConvolutionKO,Hofmann_2008_kernel}. Furthermore, Schoenberg~\cite{Schoenberg} observed that for any \emph{completely monotone function} $g\colon\mathbb{R}_{\geq 0}\mapsto \mathbb{R}$, we can use \emph{Hausdorff–Bernstein–Widder theorem}~\cite{Bernstein,widder1941laplace} to deduce that \emph{radial basis function (RBF)} kernels such as $\exp{-\beta g\left(||x-y||^2_2\right)}$ are also positive definite. Moreover, in addition to being positive definite, it was also shown that Gaussian kernels are \emph{strictly} positive definite~\cite{wendland_2004, Hofmann_2008_kernel}. 

Let $f\colon \mathbb{R}_{\geq 0}\mapsto\mathbb{R}$ be the \emph{square-root} function, i.e., $f(x)=\sqrt{x}$ for all $x\in\mathbb{R}_{\geq 0}$. Therefore, observing that $f$ is completely monotone and recalling that Gaussian kernels are strictly positive definite, i.e., $\vb*{z}^{\operatorname{T}} \exp{-\beta ||x_i-x_j||_2^2}\vb*{z}\geq 0$ for every $\vb*{z}\in\mathbb{R}^m$ with equality holding iff $\vb*{z}=\vb*{0}$, we can use \emph{Schoenberg theorem}~\cite{Schoenberg} to show the strict positive definiteness of $\exp{-\beta f\left(d(x,y)^2\right)}=\exp{-\beta d(x,y)}$. Hence, noting that $\mathcal{C}'$ is the Gram matrix of the RBF kernel $\exp{-\beta d(x,y)}$, we must have $\vb*{z}^{\operatorname{T}} \mathcal{C}'\vb*{z}\geq 0$ for every $\vb*{z}\in\mathbb{R}^m$ with equality holding iff $\vb*{z}=\vb*{0}$. This implies that $\mathcal{C}'$ is positive definite and, hence, invertible. Therefore, in turn, $\mathcal{C}^{(t)}$ is invertible.

\end{proof}
\color{black}
\PRIVICConverges*
\begin{proof}
For $1\leq t \leq N$ and $1\leq i\leq n$, in the $t$'th round of PRIVIC, the $i$'th sampled user locally sanitize their location $x^{(t)}_i$ with $\hat{\mathcal{C}}^{(t)}$ and reports the noisy location $y^{(t)}_i$. Therefore, the \emph{combined mechanism} (referred to as \emph{output probability matrix} in \cite{EhabGIBU}) for implementing GIBU is $\mathcal{G}=\begin{pmatrix}
        \hat{\mathcal{C}}^{(1)}&\ldots & \hat{\mathcal{C}}^{(N)}
    \end{pmatrix}$ s.t.

\begin{align}
    &\mathcal{G}\left(x,y^{(t)}_i\right)=\mathbb{P}\left[\left.y^{(t)}_i\right| x\right]=\hat{\mathcal{C}}^{(t)}\left(x,y^{(t)}_i\right)\nonumber\\
    &\forall x\in\mathcal{X},\,i\in\{1,\ldots,n\}\nonumber.
\end{align}
By Theorem~\ref{th:BA_is_invertible}, $\hat{\mathcal{C}}^{(t)}$ is invertible for every $t\geq 1$ and let $\hat{\mathcal{C}}^{(t)\,-1}$ denote the corresponding inverse of $\hat{\mathcal{C}}^{(t)}$. Therefore, defining $\mathcal{G}'$ s.t $\mathcal{G}'=
\frac{1}{N}\begin{pmatrix}
        \hat{\mathcal{C}}^{(1)\,-1}&\ldots & \hat{\mathcal{C}}^{(N)\,-1}
    \end{pmatrix}^{\operatorname{T}}$ ensures that $\mathcal{G}\cdot \mathcal{G}'=\mathbb{I}_{m}$ where $m=|\mathcal{X}|$. Therefore, $\mathcal{G}$ is right-invertible.
    
    Hence, combining the right-invertibility of $\mathcal{G}$ with Theorem 3 (GIBU converges to MLEs) and Corollary 1 (right-invertibility of the combined channel of GIBU implies unique MLE) of \cite{EhabGIBU}, we can conclude that $\texttt{GIBU}\left(\left(\hat{\mathcal{C}}^{(1)},\vb*{y}^{(1)}\right),\ldots, \left(\hat{\mathcal{C}}^{(N)},\vb*{y}^{(N)}\right)\right)$ estimates the unique MLE of the prior $\pi_{\mathcal{X}}$, implying that PRIVIC converges. 
\end{proof}

\color{black}

\newpage
\section{Tables}~\label{app:tables}
\begin{table}[htbp]
\centering
\caption{EMD between the true and the estimated PMFs by PRIVIC on the Paris locations.}~\label{table:UtilityData_Paris}
\resizebox{1\columnwidth}{!}{%
  \begin{tabular}{cccccc|ccccc}
    \hline
    \multirow{2}{*}{$N$} & 
      \multicolumn{5}{c|}{$\beta=1$} & 
      \multicolumn{5}{c}{$\beta=0.5$} \\
    & Round 1 & Round 2 & Round 3 & Round 4  & Round 5 & Round 1 & Round 2 & Round 3 & Round 4 & Round 5 \\
    \hline
    1 & 2.02262 & 2.02262 & 2.02262 & 2.02262 & 2.02262 & 2.02262 & 2.02262 & 2.02262 & 2.02262 & 2.02262\\
    \hline
    2 & 0.27104 & 0.27796 & 0.28247 & 0.27758 & 0.26276 & 0.57738 & 0.57994 & 0.55791 & 0.60717 & 0.56880\\
    \hline
    3 & 0.21916 & 0.23750 & 0.25035 & 0.25116 & 0.23241 & 0.51324 & 0.48295 & 0.47043 & 0.51285 & 0.45826\\
    \hline
    4 & 0.19156 & 0.21115 & 0.21726 & 0.20913 & 0.20408 & 0.43184 & 0.42398 & 0.41040 & 0.45230 & 0.41119\\
    \hline
    5 & 0.18241 & 0.19264 & 0.19570 & 0.19747 & 0.18728 & 0.39741 & 0.38771 & 0.37284 & 0.41176 & 0.36953\\
    \hline
    6 & 0.16526 & 0.18020 & 0.174578 & 0.17310 & 0.17268 & 0.37818 & 0.35482 & 0.36039 & 0.36375 & 0.38045\\
    \hline
    7 & 0.14643 & 0.18159 & 0.16092 & 0.17222 & 0.17139 & 0.35044 & 0.34383 & 0.34818 & 0.35760 & 0.36804\\
    \hline
    8 & 0.13860 & 0.17605 & 0.15938 & 0.17078 & 0.16192 & 0.34086 & 0.32983 & 0.35769 & 0.34430 & 0.34780\\
    \hline
    9 & 0.15047 & 0.16926 & 0.15153 & 0.17005 & 0.15266 & 0.33137 & 0.31749 & 0.34690 & 0.33840 & 0.34028\\
    \hline
    10 & 0.14734 & 0.14825 & 0.14585 & 0.15001 & 0.15459 & 0.3170 & 0.32975 & 0.32772 & 0.34529 & 0.32670\\
    \hline
    11 & 0.14227 & 0.14135 & 0.14326 & 0.13797 & 0.14507 & 0.31917 & 0.31851 & 0.32689 & 0.33667 & 0.32043\\
    \hline
    12 & 0.13818 & 0.14448 & 0.14703 & 0.13589 & 0.14142 & 0.32137 & 0.32451 & 0.31843 & 0.32556 & 0.34014\\
    \hline
    13 & 0.16111 & 0.13641 & 0.14893 & 0.14208 & 0.13808 & 0.31224 & 0.31219 & 0.31380 & 0.31515 & 0.31159\\
    \hline
    14 & 0.13894 & 0.13111 & 0.14094 & 0.14199 & 0.14192 & 0.30883 & 0.30496 & 0.30578 & 0.30726 & 0.30282\\
    \hline
    15 & 0.15106 & 0.14271 & 0.14601 & 0.13584 & 0.12413 & 0.29405 & 0.31198 & 0.30786 & 0.31167 & 0.30100\\
    \hline
  \end{tabular}}

\end{table}

\begin{table}[htbp]
\centering
\caption{EMD between the true and the estimated PMFs by PRIVIC on the San Francisco locations.}~\label{table:UtilityData_SF}
\resizebox{\columnwidth}{!}{%
  \begin{tabular}{cccccc|ccccc}
    \hline
    \multirow{2}{*}{$N$} & 
      \multicolumn{5}{c|}{$\beta=1$} & 
      \multicolumn{5}{c}{$\beta=0.5$} \\
    & Round 1 & Round 2 & Round 3 & Round 4  & Round 5 & Round 1 & Round 2 & Round 3 & Round 4 & Round 5 \\
    \hline
    1 & 7.37595 & 7.37595 & 7.37595 & 7.37595 & 7.37595 & 7.37595 & 7.37595 & 7.37595 & 7.37595 & 7.37595\\
    \hline
    2 & 0.37229 & 0.37038 & 0.36784 & 0.36949 & 0.36816 & 0.79621 & 0.80474 & 0.80219 & 0.80670 & 0.79940\\
    \hline
    3 & 0.29828 & 0.298370 & 0.30017 & 0.29784 & 0.29859 & 0.64931 & 0.66292 & 0.65362 & 0.65950 & 0.65285\\
    \hline
    4 & 0.26091 & 0.26029 & 0.26231 & 0.26180 & 0.26518 & 0.56896 & 0.57378 & 0.57338 & 0.57125 & 0.56618\\
    \hline
    5 & 0.23472 & 0.23419 & 0.23337 & 0.23740 & 0.23897 & 0.51672 & 0.51710 & 0.51735 & 0.51880 & 0.51138\\
    \hline
    6 & 0.21367 & 0.21432 & 0.21537 & 0.21777 & 0.21881 & 0.48194 & 0.48299 & 0.47992 & 0.48267 & 0.47791\\
    \hline
    7 & 0.19612 & 0.19761 & 0.19904 & 0.20120 & 0.20067 & 0.45531 & 0.45861 & 0.45122 & 0.45732 & 0.45450\\
    \hline
    8 & 0.18244 & 0.18412 & 0.18741 & 0.18724 & 0.18674 & 0.43588 & 0.43745 & 0.43558 & 0.43704 & 0.43584\\
    \hline
  \end{tabular}%
  }
\end{table}

\end{document}